\setlist[itemize]{leftmargin=*,noitemsep}
\newcommand{\PotentialEnergy}{E_\textrm{pot}}
\begin{document}

\title{Data-Free Learning of Reduced-Order Kinematics}

\author{Nicholas Sharp}
\affiliation{%
 \institution{NVIDIA, University of Toronto}
 \country{Canada}
}
\email{nsharp@nvidia.com}

\author{Cristian Romero}
\affiliation{%
 \institution{Universidad Rey Juan Carlos}
 \country{Spain}
}
\email{crisrom002@gmail.com}

\author{Alec Jacobson}
\affiliation{%
 \institution{University of Toronto, Adobe Research}
 \country{Canada}
}
\email{jacobson@cs.toronto.edu}

\author{Etienne Vouga}
\affiliation{%
 \institution{The University of Texas at Austin}
 \country{USA}
}
\email{evouga@cs.utexas.edu}

\author{Paul G. Kry}
\affiliation{%
  \institution{McGill University}
  \country{Canada}
}
\email{kry@cs.mcgill.ca}

\author{David I.W. Levin}
\affiliation{%
  \institution{University of Toronto, NVIDIA}
  \country{Canada}
}
\email{diwlevin@cs.toronto.edu}

\author{Justin Solomon}
\affiliation{%
  \institution{Massachusetts Institute of Technology}
  \country{USA}
}
\email{jsolomon@mit.edu}

\renewcommand\shortauthors{Sharp \etal{}}

\begin{abstract}
  Physical systems ranging from elastic bodies to kinematic linkages are defined on high-dimensional configuration spaces, yet their typical low-energy configurations are concentrated on much lower-dimensional subspaces.
  This work addresses the challenge of identifying such subspaces \emph{automatically}: given as input an energy function for a high-dimensional system, we produce a low-dimensional map whose image parameterizes a diverse yet low-energy submanifold of configurations.
  The only additional input needed is a single seed configuration for the system to initialize our procedure; no dataset of trajectories is required.
  We represent subspaces as neural networks that map a low-dimensional latent vector to the full configuration space, and propose a training scheme to fit network parameters to any system of interest.
  This formulation is effective across a very general range of physical systems; our experiments demonstrate not only nonlinear and very low-dimensional elastic body and cloth subspaces, but also more general systems like colliding rigid bodies and linkages.
  We briefly explore applications built on this formulation, including manipulation, latent interpolation, and sampling.
\end{abstract}

\begin{teaserfigure}
    \centering
    \includegraphics[width=\linewidth]{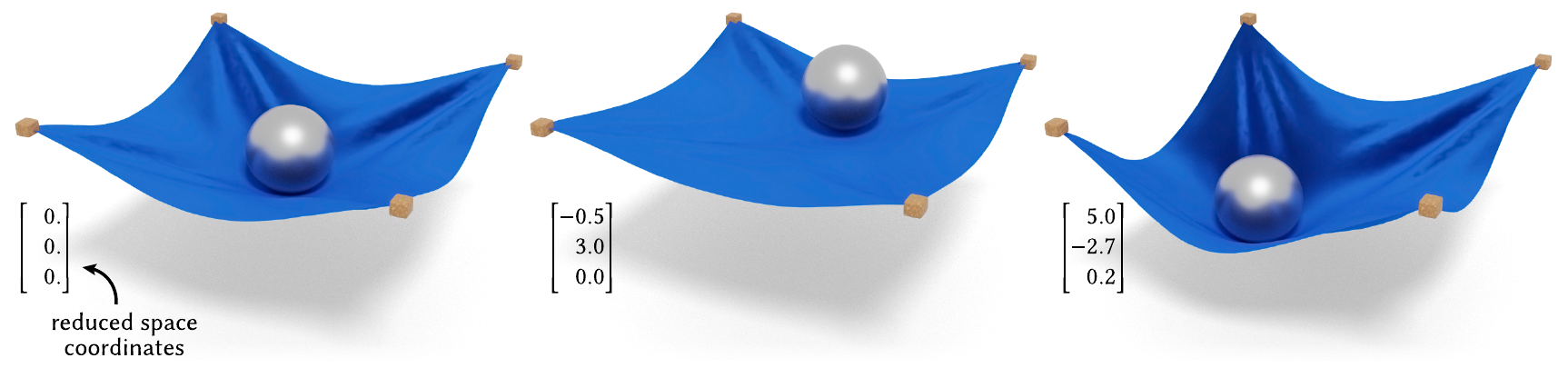}
    \vspace{-.3in}
    \caption{
      We fit neural networks as reduced-order models to learn low-dimensional approximations of complex physical systems. 
      This approach applies to a broad range of systems and requires no data as input, only a differentiable energy function and a seed state for sampling.
      Here, we show a 3-dimensional neural subspace for a system wherein a ball rolls around on a pinned cloth. 
      The ball location and cloth geometry are simultaneously encoded by the subspace, which is fit automatically from a potential energy that includes gravity, cloth bending and stretching, and a collision penalty between the ball and the cloth.
      \label{fig:teaser}
    }
    \vspace{1em} %
\end{teaserfigure}


\maketitle 

\section{Introduction}

Physical simulation algorithms perennially achieve new heights of detail and fidelity.  Modern computer graphics techniques successfully capture phenomena from elasticity to fluid motion, producing visual effects that are nearly indistinguishable from real life. With this added realism, however, comes substantial computational expense, often placing detailed physical simulation in the realm of offline computations involving many degrees of freedom.

In settings like interactive graphics, however, it is advantageous to reparameterize the system with a much smaller number of degrees of freedom which describe only the states that are actually of interest. 
These \emph{subspaces}, or \emph{reduced order models} enable downstream tasks, most traditionally fast simulation in reduced coordinates, but also other operations such as user-guided animation, interpolation, or sampling states of the system.

However, identifying such subspaces is inevitably challenging, because they must trade-off between the conciseness and expressivity.
Classical reduced-order simulation methods such as linear modal analysis or modal derivatives have typically focused on perturbative motions about a rest state for a deformable object.
These methods are highly effective numerical schemes for fast forward-integration of system dynamics; our approach will seek a complementary technique in two senses.

First, such approaches typically only approximate object behavior in a truncated region about the rest pose, and dramatic nonlinear motions are not well-represented in the subspace.
This concern is already impactful for the classic case of deformable bodies undergoing large motions, but is a total show-stopper when seeking reduced kinematics for more general physical systems, such as rigid bodies under collision penalties. 
In these settings, the kinematic landscape is so nonlinear that an approximation in terms of a local expansion does not capture any significant behavior.

Second, our subspaces will parameterize \emph{only} the desired configuration space of the system.
The challenge of large motions can be mitigated in perturbative methods by using a moderately large reduced basis.
However, this returns to the original problem with the full configuration space: the relevant system configurations again lie only on narrow submanifold of the space.
In contrast, we seek very low-dimensional but highly nonlinear subspaces, such that even large motions and physical systems with irregular potential landscapes can be directly parameterized; an important property for applications like animation and sampling.

This work is not the first to propose using a richer class of highly nonlinear models to fit reduced kinematics (see \eg{} \cite{Fulton2018,Holden2019,Shen2021,srinivasan2021learning}).
Our motivating goal is to do so \emph{without} data-driven fitting; we do not require any dataset of representative simulation trajectories or states as input. 
Collecting such a high-quality dataset is challenging and labor-intensive, both in the sense of engineering effort and user input.
It is a significant obstacle for past methods which otherwise offer excellent properties~\cite{hahn2014subspace,Fulton2018}.
To be clear, although our method leverages tools from machine learning, it is \emph{not} data-driven in the usual sense.
Instead, it mirrors recent ``overfit'' neural networks~\cite{xie2022neuralfields}, where models are fit in isolation to each example, and neural networks are used simply as a general and easy-to-optimize nonlinear function space.

\begin{figure}[b]
\begin{center}
  \includegraphics[width=\columnwidth]{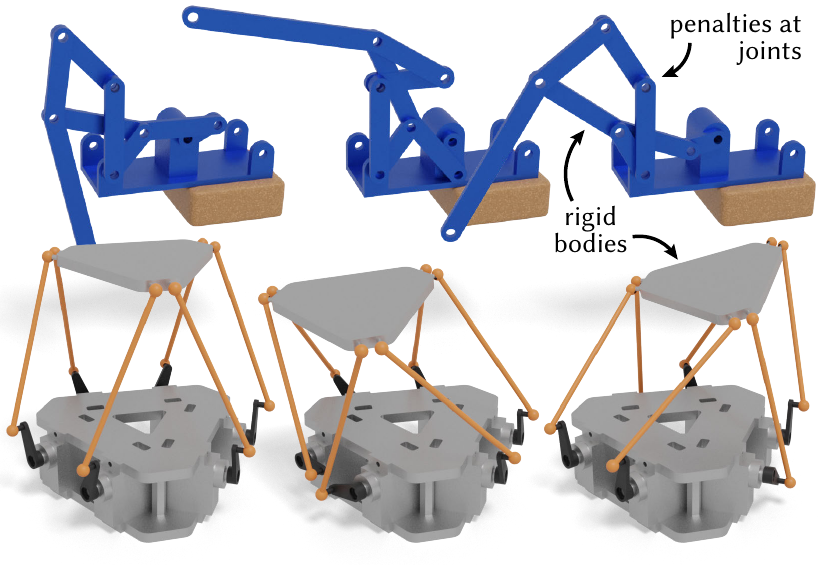}
    \caption{
      Our subspaces are effective on systems beyond deformable bodies and cloth.
      Here, linkages are naively described by the location of each segment, under a potential with strong penalties holding joints together.
      Neural subspace optimization automatically discovers low-dimensional kinematic motion as we fit a $1$d subspace to the Klann Linkage, \emph{top}, and a $3$d subspace to the Stewart Mechanism, \emph{bottom}.
      \label{fig:linkage}
    }
\end{center}
\end{figure}

\paragraph{Summary}
In this paper, we apply machine learning to identify a nonlinear reduced model for physical motion. Our approach is designed around two significant properties:
\begin{itemize}
    \item We do \emph{not} assume input data such as simulation trajectories are provided. Instead, our method is self-supervised, taking the energy function as input and automatically sampling it to explore the low-energy subspace.
    \item Our method is very general, and avoids specific assumptions about \eg{} deformable bodies. It applies broadly across systems such as rigid bodies and linkages under penalty potentials, or even multi-physics combinations of several different interacting systems.
\end{itemize}

Provided a differentiable potential energy function describing a given physical system and a single seed state from which to begin the search, our learning algorithm automatically determines an effective nonlinear low-order model, trading off between staying in low-potential energy configurations and coverage of the configuration space. 
Two loss parameters are exposed to adapt our objective to the system of interest.
Once fit, the parameterized kinematic subspace can be leveraged for a variety of purposes, from simulation via standard dense integrators in the subspace, to kinematic exploration and sampling.
We demonstrate our approach on a variety of physical systems, including deformable bodies, cloth simulations, rigid bodies under collision, and mechanical linkages.

\section{Related Work}

\paragraph{Subspaces for Simulation}
Subspace simulation methods have a long history in engineering and graphics, beginning with linear modal analysis~\cite{shabana1991theory,jamespai2002,vonTycowicz2015,hildebrandt2011interactive,james2006precomputed}. 
The survey of \citet{benner2015survey} provides a broad summary.
Linear modes provide a concise basis expressing deformation around an object's rest state. Fast simulation methods then restrict the equations of motion to this subspace. Difficulties arise in large-deformation settings wherein the basis size must be greatly increased to approximate nonlinearity~\cite{brandt2018hyper}. 
\citet{Barbic2005} augment the modal basis with second-order ``modal derivatives,'' while still resulting in a linear deformation subspace, and \citet{choi2005modal} and \citet{yang2015expediting} explore rotation and higher order terms.

While modal derivatives offset some disadvantages of linear modal analysis, both techniques are limited to representing deformations centered around a rest pose. This makes representing highly nonlinear deformations and effects difficult, and obstructs the application to more general physical systems as we show in~\figref{modes_comparison}.
Snapshot methods generalize beyond a region around the rest state by collecting large databases of simulation outputs and fitting a reduced space to that data. Initial algorithms used PCA~\cite{noor1980reduced} to construct an improved subspace, but the linear PCA basis still must be large to capture a wide range of deformations. 

Modern neural representations such as autoencoders~\cite{Fulton2018,Shen2021} offer a potential panacea. However, such methods again rely on simulation snapshots for training, and thus resort to user-guided sampling, making these methods time consuming and compute intensive. 
Like us, \citet{brandt2016geometric} sample configuration space, but do so in a way which still only interpolates specified configurations.
Even methods learning neural enrichments to linear subspaces~\cite{romero2021learning} suffer from the data generation problem; no successful self-supervised, data-free learning method for nonlinear subspaces has yet been demonstrated.

\paragraph{Neural and Data-Driven Methods}

Recent work across machine learning shows neural networks have significant potential to model complex physical systems efficiently~\cite{kochkov2021machine,tompson2017accelerating,pfaff2020learning,gao2021super}.
These approaches range from fitting update rules to observed data, to accelerating expensive numerical steps with data-driven proxies.
The most similar of these efforts tackle problems in dynamics and deformation, often with the goal of producing efficient real-time simulators~\cite{grzeszczuk1998neuroanimator,romero2020modeling,zheng2021deep}.
Applications of this work, as well as ours, include graphics, animation, robotics, design \cite{li2018learning,li2019propagation}.

The task of modeling dynamics and collisions in cloth has received particular attention~\cite{hahn2014subspace,Holden2019, bertiche2021pbns,zhang2021neuralGarments,santesteban2022snug, bertiche2022neural}.
In fact, \citet{bertiche2021pbns} and \citet{santesteban2022snug} leverage self-supervised setups which bear some similarity to ours, although many aspects of their approach are specific to garment modeling task. 
Additionally, a primary challenge in our setting is avoiding collapse of the subspace, while with clothing this is automatically handled by human body shape and motion distributions.

\section{Method}
\label{sec:Method}

We present a straightforward approach to fit a neural network modeling low-energy kinematics of a physical system. 
The formulation is general, applying to a broad set of systems and capturing both linear and nonlinear subspaces.
It follows widespread success fitting low-dimensional submanifolds of high-dimensional spaces using neural networks (\eg{} ~\cite{lee2020model}).

\subsection{Neural Subspace Maps}
\label{sec:NeuralSubspaceMaps}

Consider a map $f_\theta$, which takes a low-dimensional subspace $\R^d$ to the high-dimensional configuration space $\R^n$ of some physical system ($d\ll n$), so
  $f_\theta : \R^d \to \R^n.$ 
For example, $\R^n$ might represent the set of all possible vertex configurations for a given triangle mesh (so, $n=3|V|$ where $|V|$ is the number of vertices), while $\R^d$ parameterizes a space of deformations that move multiple vertices in tandem.  The vector $\theta\in\R^k$ contains learnable parameters specific to the physical system, e.g.\ neural network weights.

Classical simulation algorithms operate on $\R^n$, where the potential energy \smash{$\PotentialEnergy:\R^n\to\R$} and external forces can be evaluated directly; the expense of physical simulation then comes from the large number of variables $n$ that must be manipulated.
However, $\R^n$ contains many unlikely configurations, corresponding to high-energy deformations under the potential energy $\PotentialEnergy$.  
In many settings, we can reasonably expect the kinematics to stay in the image \smash{$f_\theta(\R^d)$} of some map $f_\theta$ parameterizing typical configurations.  

As a simple example, if we take $\theta=(A,x_0)$ for some $A\in\R^{n\times d}$ and $x_0\in\R^n$ with $f_\theta(z)=Az+x_0$, we recover the basic setup of linear modal analysis.  In this setting, $x_0$ is the rest state of the system, and the columns of $A$ parameterize low-energy perturbations of $x_0$. 
However, linear models cannot fit general nonlinear kinematics.

More broadly, efficient and accurate simulation demands a map $f_\theta(\cdot)$ spanning low-energy configurations of the system.  Unlike classical modal analysis, 
an immediate benefit of working with a more general $f_\theta$ is that $f_\theta$ can encode \emph{nonlinear} subspaces, rather than only linear modes.  In this work, we model $f_\theta$ as a neural network, with weights $\theta$ (see \secref{ImplementationDetails} for architectures).

\figref{rigidity_analysis} illustrates this property on a hanging stiff cow under elastic and gravitational potentials. Traditional modal analysis is limited to linear skewing about a rest pose, whereas our neural model finds a nonlinear subspace with curved swinging motion.

\begin{figure}
\begin{center}
    \includegraphics[width=\columnwidth]{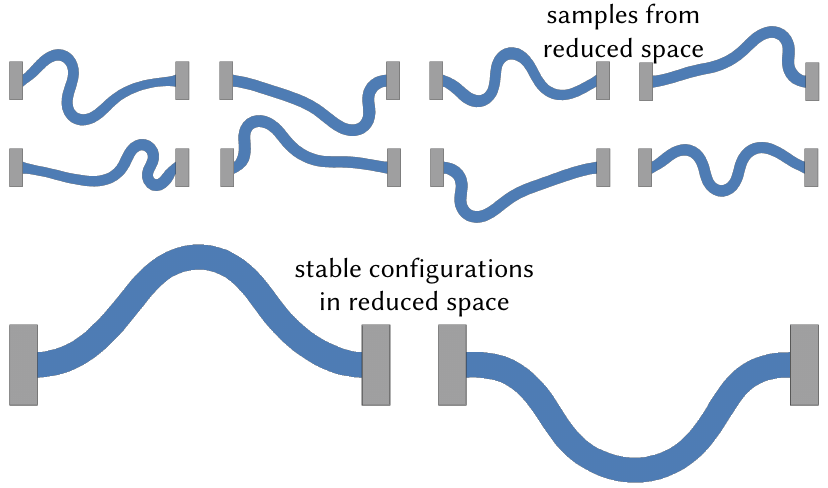}\vspace{-.15in}
    \caption{
      We fit an $8$-dimensional reduced space for a neohookean elastic bar under gravity in a compressed buckling configuration.
      \figloc{Top:} samples from the smooth yet irregular reduced space fit by the neural network.
      \figloc{Bottom:} the reduced space contains both of the stable buckled configurations.
      \label{fig:bistable_bar}
    }
\end{center}
\end{figure}

\subsection{Objective Function}
\label{sec:ObjectiveFunction}

For any choice of subspace architecture $f_\theta$, we will fit the parameters $\theta$ via stochastic gradient descent on an objective function.
We must then identify an objective function which drives $f_\theta$ to be a suitable, high-quality kinematic subspace.

Our key observation is that we can optimize for $\theta$ directly using the analytical description of the system---in particular, the potential energy function $\PotentialEnergy(\cdot)$---rather than requiring training data. 
This approach sidesteps the data collection needed for supervised models: We do not assemble a dataset of  motion trajectories or even forward-integrate the dynamics of the system during training.

We might seek low-energy subspaces of a system by minimizing the expected potential energy of randomly-sampled subspace configurations $z$ as follows:
\begin{equation}
\label{eq:subspace_loss_potential}
  \E_{z\sim \mathcal{N}} \bigg[\PotentialEnergy(f_\theta(z)) \bigg].
\end{equation}
Here, $\mathcal N$ denotes the Gaussian distribution over $\R^d$ with mean $0$ and variance $I_{d\times d}$; $\E$ denotes the expectation with respect to a random variable.  Note we have not put a scale on our latent variable $z$, so we are using $\mathcal N$ to capture a reasonable range of values. 

Minimizing \eqref{subspace_loss_potential} with respect to $\theta$, however, yields an uninteresting map $f_\theta$:  The optimal $f_\theta$ %
maps all latent variables $z$ to the lowest-energy configuration, i.e., the minimizer of $\PotentialEnergy$.

To combat the degeneracy above, we also expect our subspaces to span some sizable region of configuration space $\R^n$.  To accomplish this goal, we could attempt to impose isometry up to scale on $f_\theta$, e.g.\ by enforcing that
$$
|f_\theta(z) - f_\theta(z')|_M\approx \sigma|z - z'|
$$
for typical $z,z'\in\R^d.$ 
Here the distance in configuration space $\R^n$ is measured with respect to the system's mass matrix $M\in\R^{n\times n}$: $|x|_M^2 := x^\top M x.$
 Equivalently, we  write:
\begin{equation}\label{eq:loglipschitz}
\log\frac{|f_\theta(z) - f_\theta(z')|_M}{\sigma|z-z'|}\approx 0
\end{equation}

\clearpage
Enforcing strict equality in \eqref{loglipschitz} for all $z,z'$ is a stiff constraint; indeed, one can show that changing $\approx$ to $=$ above forces $f_\theta$ to be affine (see Proposition~\ref{prop:affine} in the supplemental material).  Hence, we instead use a soft penalty to avoid degeneracies:
\begin{equation}
\label{eq:subspace_loss_penalty}
  \E_{z,z' \sim \mathcal{N}} \left[ \left(\log \dfrac{|f_\theta(z) - f_\theta(z')|_M}{\sigma |z - z'|}\right)^2 \right] .
\end{equation}
Intuitively, this expression prefers maps $f_\theta(\cdot)$ whose Lipschitz constant is roughly $\sigma$ everwhere.
Similar formulations have recently been leveraged in other contexts by \citet{du2021learning}.

\begin{figure}[b]
\begin{center}
    \includegraphics[width=\columnwidth]{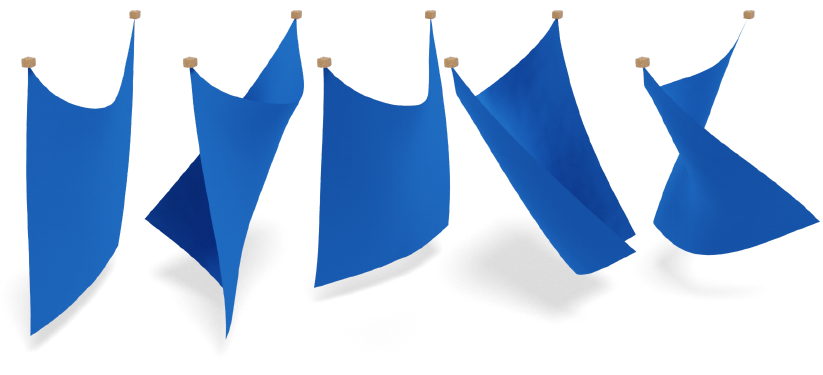}
    \caption{
      States from an $8$d reduced-order subspace of a thick hanging cloth.
      \label{fig:cloth_pinned}
      \vspace*{5em}
    }
\end{center}
\end{figure}

Combining these terms and using $\lambda\in\R$ as a weight, we optimize for the parameters $\theta$ as follows:
\begin{center}
\begin{tcolorbox}[ams equation,colback=white,colframe=black,arc=0mm,width=(\linewidth-1em),boxrule=0.2mm,left=0mm,right=1mm,top=0.5mm]
\label{eq:subspace_loss}
  \min_{\theta} \E_{z,z' \sim \mathcal{N}} \bigg[\PotentialEnergy(f_\theta(z)) 
                            + \lambda \left(\log \dfrac{|f_\theta(z) - f_\theta(z')|_M}{\sigma |z - z'|}\right)^2  \bigg]
\end{tcolorbox}
\end{center}
In this formulation, our latent subspaces map from a roughly unit-scaled region about the origin, with low-energy configurations concentrated near $0$.
To accelerate subspace fitting in practice, rather than sampling pairs $z,z'$ to evaluate ~\eqref{subspace_loss_penalty}, we estimate it pairwise on all $z$ samples in a training batch.

The hyperparameter $\sigma$ adjusts the size of the subspace. Small $\sigma$ yields subspaces that are tightly concentrated around low-energy configurations, while large $\sigma$ can force the map's image to include higher-energy states. 
The weighting parameter $\lambda$ should be chosen to ensure that \eqref{loglipschitz} roughly holds.
Hyperparameters inevitably arise due to the wide variety of scaling and units used in physical energies---\tabref{experiment_details} gives values for all experiments; see \secref{SelectingHyperparameters} for additional discussion.

\subsection{Reduction to Modal Analysis}\label{sec:modal}

When we take our parameters $\lambda$ and $\sigma$ to the extreme, our general model reduces to a classical linear method for modal analysis in physical simulation.  In particular, in Appendix~\ref{sec:reductionproof} we derive the following proposition:

\newpage

\begin{proposition}\label{prop:limit}
    Suppose $f_\theta$ has the capacity to represent affine functions.  Then, as $\sigma\to 0$ and $\lambda\to\infty$, the solution to \eqref{subspace_loss} satisfies
    \begin{equation}\label{eq:perturbativeformula}
\left\{
    \begin{array}{r@{\ }l}
        f(z) &= Az+b\\
        b &= \arg\min_b \PotentialEnergy(b)\\
        A &= \sigma \cdot \textsc{top-$d$-generalized-eigenvectors}(M,H(b)).
    \end{array}
    \right.
\end{equation}
\end{proposition}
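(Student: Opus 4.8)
The plan is to decouple the two limits, handling $\lambda\to\infty$ and $\sigma\to 0$ in sequence. First I would send $\lambda\to\infty$ to collapse the feasible set of maps onto scaled isometries (hence affine maps), and only then expand the potential-energy term to leading order in $\sigma$, reducing the problem to a finite-dimensional trace minimization whose solution is a generalized eigenvalue problem.

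For the first step, observe that the penalty term in \eqref{subspace_loss} is nonnegative and carries the weight $\lambda$. If the infimum of the objective is to stay bounded as $\lambda\to\infty$, the expected squared log-ratio must vanish, which forces
\[
|f(z)-f(z')|_M=\sigma\,|z-z'|
\]
for $\mathcal N$-almost every pair $(z,z')$, and hence — since any admissible $f_\theta$ is continuous — for all $z,z'\in\R^d$. This is exactly the hypothesis of Proposition~\ref{prop:affine}, so $f$ must be affine, $f(z)=Az+b$. Substituting this form into the isometry identity yields $(z-z')^\top A^\top M A\,(z-z')=\sigma^2|z-z'|^2$ for all $z,z'$, i.e.\ the scaled $M$-orthonormality constraint $A^\top M A=\sigma^2 I_{d\times d}$. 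The assumption that $f_\theta$ can represent affine functions guarantees this constrained infimum is attained within the family, so no expressivity is lost in the limit.

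For the second step, I would write $A=\sigma B$ with $B^\top M B=I_{d\times d}$ and Taylor-expand the potential about $b$:
\[
\PotentialEnergy(Az+b)=\PotentialEnergy(b)+\nabla\PotentialEnergy(b)^\top(\sigma Bz)+\tfrac{\sigma^2}{2}\,z^\top B^\top H(b)B\,z+O(\sigma^3),
\]
where $H(b):=\nabla^2\PotentialEnergy(b)$. Taking the expectation over $z\sim\mathcal N(0,I_{d\times d})$, the linear term drops because $\E[z]=0$, and the quadratic term becomes $\tfrac{\sigma^2}{2}\operatorname{tr}(B^\top H(b)B)$ via $\E[zz^\top]=I_{d\times d}$. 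The reduced objective is thus $\PotentialEnergy(b)+\tfrac{\sigma^2}{2}\operatorname{tr}(B^\top H(b)B)+O(\sigma^3)$. The $O(1)$ term in $b$ dominates, so the minimizing $b$ converges to $\arg\min_b\PotentialEnergy(b)$, the rest state; at such a minimizer $H(b)\succeq 0$ and the surviving $O(\sigma^2)$ term selects $B$ by minimizing $\operatorname{tr}(B^\top H(b)B)$ subject to $B^\top M B=I_{d\times d}$.

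This trace minimization is classical: factoring $M=LL^\top$ and setting $C=L^\top B$ reduces it to minimizing $\operatorname{tr}(C^\top \tilde H C)$ over $C$ with orthonormal columns, where $\tilde H=L^{-1}H(b)L^{-\top}$. By the Ky Fan / Courant–Fischer characterization, the minimizer spans the eigenspace of the $d$ smallest eigenvalues of $\tilde H$, which correspond exactly to the generalized eigenpairs $H(b)v=\mu M v$ of smallest $\mu$ — equivalently the top-$d$ generalized eigenvectors of the pencil $(M,H(b))$, i.e.\ the softest modes of modal analysis. Rescaling $A=\sigma B$ recovers \eqref{perturbativeformula}. The main obstacle I expect is analytic rather than algebraic: rigorously justifying the order of (and interchange between) the two limits, and controlling the Taylor remainder uniformly enough that the \emph{minimizer} — not merely the objective value — converges to the claimed affine map. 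Handling possible nonuniqueness of $\arg\min\PotentialEnergy$ and ensuring the generalized eigenproblem is well-posed (e.g.\ a spectral gap at the $d$-th mode, so the selected subspace is unique) are the remaining points requiring care.
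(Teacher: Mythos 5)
Your proposal follows essentially the same route as the paper's derivation: send $\lambda\to\infty$ to force the scaled isometry and invoke Proposition~\ref{prop:affine} to get the affine form with $A^\top M A = \sigma^2 I$, then Taylor-expand $\PotentialEnergy$ about $b$, kill the linear term by $\E[z]=0$, reduce the quadratic term to $\tfrac12\operatorname{tr}(A^\top H(b)A)$ via $\E[zz^\top]=I$, and let $\sigma\to 0$ so that $b$ minimizes $\PotentialEnergy$ and $A$ solves the constrained trace minimization, i.e.\ the generalized eigenproblem. Your treatment is if anything slightly more explicit than the paper's (the $A=\sigma B$ substitution, the Ky Fan argument, and the caveats about limit order and spectral gaps), but it is the same argument.
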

\noindent In words, as we push to preserve geometry of the configuration space exactly ($\lambda\to\infty$) and to prioritize small neighborhoods ($\sigma\to0$), we recover a linearization about the minimum-energy state.

\subsection{Subspace Simulation}
\label{sec:SubspaceSimulation}
Although we focus primarily on simply encoding the kinematic subspace, if desired our subspaces $f_\theta$ can also be used for forward simulation via time integration in the latent space. 
In principle any integration scheme is compatible with our approach, we leverage a simple implicit Euler scheme.

We optimize to obtain the subspace configuration $\hat{z}$ in the next timestep~\cite{hahn2012rig} as:
\begin{equation}
\label{eq:TimestepOptimization}
    \hat{z} = \arg \min_{z} \left[ \frac{1}{2 h^{2}}\left| f_{\theta}(z) - \bar{q} \right|_{M}^{2} + \PotentialEnergy(f_{\theta}(z)) \right],
\end{equation}
where $h$ is the timestep and $\bar{q}$ is an inertial guess computed from previous configurations. 
The optimization is performed in the neural space $z$ via substitution into the optimization formulation~\cite{Fulton2018} and solved using L-BFGS~\cite{Liu89onthe}. 

The performance characteristics of this integration are very different from past methods; an advantage is that integration is performed in a small, dense space amenable to fast vectorized computation, while a disadvantage is that the nonlinearity of our subspaces may demand many optimization steps for accuracy.
In general such integration is significantly faster than simulation in the full space, but does not outperform existing specialized subspace integrators.
This work does not accelerate energy function evaluation, but could be used in conjunction with methods such as \citet{An2008}.

\subsection{Conditional Subspaces}

A benefit of our neural subspace formulation is that subspaces can easily be conditioned on auxiliary data such as material parameters and external constraints. Conditional parameters can be adjusted to adapt the subspace to a family of systems, and even can be varied dynamically at runtime.

More precisely, we can generalize $f_\theta$ to incorporate conditional parameters as additional inputs to the neural subspace map as
\begin{equation}
  \label{eq:conditional_subspace}
  f_\theta : \R^d \times \R^m \to \R^n \qquad q \gets f_\theta([z,c])
\end{equation}
where the conditional parameters are a vector $c \in \R^m$, and $[z,c]$ denotes vector concatenation. During training, we additionally sample from the space of system-defined valid conditional parameters to evaluate the expectation in \eqref{subspace_loss}.
In \figref{conditional_bar} we show an elastic bar conditioned on both the location of boundary conditions, and the material stiffness.

\begin{figure}[b]
\begin{center}
    \includegraphics[width=\columnwidth]{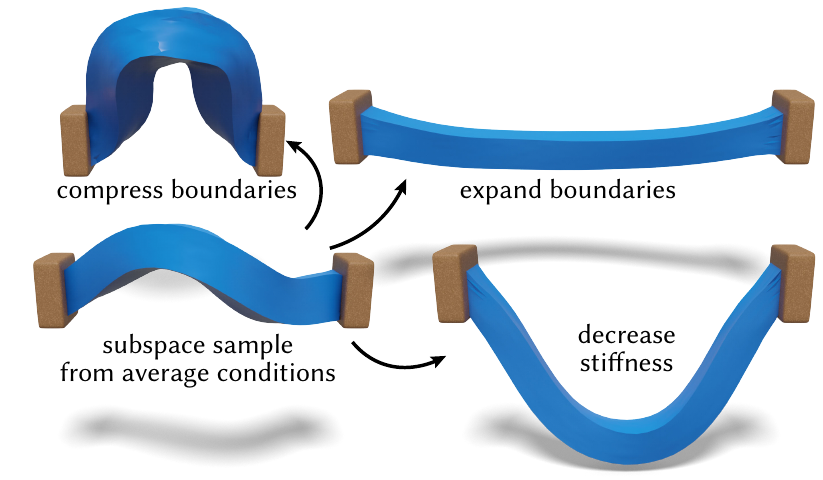}
    \vspace{-.2in}
    \caption{
      Neural subspaces are easily conditioned on parameters like boundary conditions or material properties.
      Here, we learn a subspace for an elastic bar, conditioned on the material stiffness ratio and locations of pinned endpoints.
      For each parameter choice, our subspace captures plausible kinematics.
      \label{fig:conditional_bar}
    }
\end{center}
\end{figure}

\vfill
\pagebreak

\section{Architectures and Training}
\label{sec:NetworkArchitectureAndTraining}

In principle, any neural architecture could be used to represent our subspace map $f_\theta$.
In this work we focus on the problem formulation itself, and consider only simple multi-layer perceptrons (MLPs).

\subsection{Seeded Subspace Exploration}
\label{sec:SubspaceExploration}

One important modification is needed to effectively train our subspaces, accounting for the difficulties of stiff physical energies, which are very different from typical data-driven machine learning losses.
The challenge is that when starting from a randomly-initialized neural subspace map, merely finding \emph{any} point on the low-energy submanifold in configuration space amounts to a surprisingly hard optimization problem.
Consider the case of an elastic body: samples from a randomly-initialized subspace network yield configurations with vertices randomly positioned in space, leading to extremely large energies and many inverted elements.

Although much recent work has tackled robust simulation of such systems in the classical setting~\cite{smith2018stableNeoHook,smith2019analytic,kim2020finite,lin2022isotropic}, here we have the added challenge that the system degrees of freedom are parameterized by a highly nonlinear neural network sampled stochastically at each optimization step.
As a simple solution, we propose a training procedure which explores the configuration space outward from an initial 
\emph{seed} configuration $q_\textrm{seed} \in \mathbb{R}^n$ provided as input to the method. 
Precisely, during training only we parameterize the neural subspace map as
\begin{equation}
  \label{eq:scheduled_neural}
  f_\theta(z) := \rho \textrm{MLP}_\theta(z) + (1-\rho) q_\textrm{seed}
\end{equation}
where $\rho$ is a scheduling parameter which linearly increases from $0 \to 1$ as training proceeds.
Crucially, at the conclusion of training, this seed state is entirely absent and the resulting network is an ordinary MLP.
Likewise, we also modulate the scale parameters $\sigma$ in \eqref{subspace_loss_penalty}, multiplying by a factor of $\rho$ because at initialization $f_\theta$ is a constant map to the seed state which cannot possibly achieve the target scale.

Intuitively, this training procedure ``grows'' the subspace outward as fitting proceeds, initially expanding about the seed state but ultimately gaining the freedom to parameterize an arbitrary map.
This approach does demand $q_\textrm{seed}$ as an additional input to the method, but in our systems this proved to be no additional burden: a suitable state was already implicit in the definition of the system, \eg{} the rest state of an elastic body, or the pinned-joint configuration of a linkage.
Also, we emphasize that the formulation in~\secref{Method} does not make any assumptions involving the seed state. It does not need to be a rest state or minimal-energy configuration, any somewhat low-energy initial state will do, and the resulting subspaces have little dependence on the choice of seed.
For example, in systems like \figref{bistable_bar} which have no single distinguished configuration, any choice will yield similar subspaces.

\begin{figure}
\begin{center}
    \includegraphics[width=\linewidth]{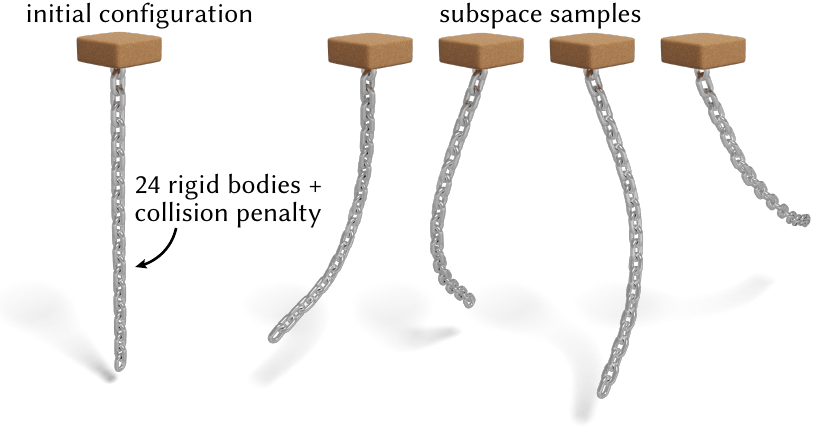}
    \caption{
      A chain of rigid bodies, held together by pairwise signed-distance collision evaluated at vertices.
      Despite the highly irregular potential landscape, our neural subspace fits an $8$d subspace spanning the large-scale continuum behaviors of the system.
      \label{fig:chain}
      \vspace*{-2em}
    }
\end{center}
\end{figure}

\begin{table*}
  \centering
  \caption{Parameters and dimensions for the experiments appearing in this work.
  All networks are MLPs, with the latent dimension, network size, and scaling parameter $\sigma$ adapted based on the degrees of freedom and desired range of motion for the kinematic space. See~\secref{SelectingHyperparameters} for extended discussion.
  \label{tab:experiment_details}}
  \rowcolors{2}{white}{CornflowerBlue!25}
  \begin{tabular}{l l r r r r r r r r r}
    \toprule
    \textbf{Name} & \textbf{Figure} & \textbf{Energy} & \textbf{Full dim} & \textbf{Reduced dim} & \textbf{Condition dim} & \textbf{MLP size} & $\lambda$ & $\sigma$ \\
    Cloth ball & \figref{teaser} & cloth + penalty              & 6069 & 3 & 0 & 5x128 & 1.0 & 0.05 \\
    Klann linkage & \figref{linkage} & rigid + penalty          & 84 & 1 & 0 & 5x64 & 1.0 & 1.0 \\
    Stewart platform & \figref{linkage} & rigid + penalty       & 168 & 6 & 0 & 5x128 & 0.5 & 1.0 \\
    Bistable bar & Figures~\ref{fig:bistable_bar},~\ref{fig:twisted_bistable} & neohookean           & 830 & 8 & 0 & 5x128 & 0.5 & $10^{-3}$ \\
    Hanging sheet & \figref{cloth_pinned} & cloth               & 6072 & 8 & 0 & 5x128 & 0.1 & 0.1 \\
    Conditional bar & \figref{conditional_bar} & neohookean     & 942 & 8 & 2 & 5x128 & 1.0 & 1.0 \\
    Chain & Figures~\ref{fig:chain},~\ref{fig:modes_comparison}  & rigid + collision & 288 & 8 & 0 & 5x128 & 1.0 & 1.0 \\
    Mini Chain& Figure~\ref{fig:chain_compare_autoenc}  & rigid + collision & 24 & 4 & 0 & 5x128 & 1.0 & 0.3 \\
    Hanging cow & \figref{rigidity_analysis} & neohookean       & 14676 & 3 & 0 & 5x128 & 0.5 & 0.1 \\
    Cantilever & \figref{modes_comparison} & neohookean         & 1500 & 2 & 0 & 5x128 & 0.5 & 0.5 \\
    Elephant & \figref{fulton_compare} & neohookean            & 1926 & 8 & 0 & 5x64 & 1.0 & $10^{-4}$ \\
    \bottomrule
  \end{tabular}
\end{table*}

\subsection{Implementation Details}
\label{sec:ImplementationDetails}

We implement all physical systems and neural networks in JAX~\cite{jax2018github}, leveraging automatic differentiation to compute derivatives.
Our neural networks use ELU activations and $5$ hidden layers.
The width of the hidden layers is adjusted from $64-128$ based on the scale of the problem.
\tabref{experiment_details} gives hyperparameters for all examples in this work, and \secref{SelectingHyperparameters} in the supplement provides further details about selecting parameters when applying our method to new and different physical systems.
An implementation is included in the supplement and at \url{github.com/nmwsharp/neural-physics-subspaces}.

We use the Adam optimizer~\cite{kingma2014adam} for training, and a learning rate of $10^{-4}$ for $10^6$ training steps, with batch size $32$. After each $250k$ training iterations the learning rate is decayed by a factor of $0.5$. 
Models are trained and evaluated on a single RTX 3090 GPU.
Memory usage is modest ($< 1$GB/model), and training times range from $1$ minute for small systems to $1$ hour for large systems.
Runtime performance when exploring or sampling the subspace is extremely fast; a single forward pass of our networks takes $<1ms$ and is dominated by pipeline latencies.
If the simulation is time-stepped in the subspace, performance is dominated by the cost of evaluating and differentiating the system's energy function, generally $10$s of milliseconds per timestep for the systems shown.
A scaling study of fitting and evaluation is included in the supplemental material.

\section{Evaluations}

\subsection{Physical Systems}

Here we summarize the systems/energies considered in this work. See \tabref{experiment_details} for problem sizes and parameter choices.

\paragraph{FEM} We use the finite element method (FEM) for the simulation and learning of deformable objects, discretizing the continuum in 2D and 3D examples using triangular and tetrahedral elements, respectively. We aggregate the contributions of all elements under a stable neo-Hookean material model~\cite{smith2018stableNeoHook} as the total potential energy.

\paragraph{Cloth model} We also model thin cloth sheets discretized as triangular surface meshes. 
Our experiments make use of a simple energy model with a bending term defined at edges~\cite{grinspun2003discrete}, and a constant-strain  Saint Venant–Kirchhoff (StVK) stretching term on faces.
In \figref{cloth_pinned} we compute a subspace for a pinned thick hanging sheet.
For now, we do not model cloth self-collisions, although \figref{teaser} shows basic cloth-object interactions (\figref{teaser}).

\paragraph{Penalty Functions}
\label{sec:PenaltyFunctions}

We can also include penalties to enforce constraints in the system. For instance, we use penalties to prevent collisions between objects and to enforce joint constraints in articulated mechanisms. Our penalty functions are defined as: 
\begin{equation}
    \label{eq:penalty}
    w_{eq} |C_{eq}(q)|^{2} + w_{ineq}  |\min(C_{ineq}(q),0)|^{2}
\end{equation}
where $C_{eq}, C_{ineq}$ are the equality and inequality constraint functions with the constraints to be imposed, and $w_{eq}, w_{ineq} \in \R^{+}$ are weighting factors.
These penalties must not go to infinity, because we must optimize through them during training even when sampling violates the constraint.
These penalties are added to the energy function, and otherwise our subspace fitting is applied as normal, fitting local constraints without any special treatment.
We demonstrate learning with penalty functions in Figures~\ref{fig:teaser} \&~\ref{fig:chain}, where inequality penalties and signed distance model collision, and \figref{linkage}, where an equality penalty holds linkage joints together.

\paragraph{Rigid Bodies}
We also consider rigid body dynamics, where each body's state is described by $\mathbb{R}^{12}$ unconstrained coefficients, interpreted as the entries of a $3 \times 4$ transformation matrix $\left[R|t\right]$.
An additional potential term $|(R^T R - I)|_2^2$ encourages the rotation component to be orthogonal via a Frobenius norm.
Collisions are implemented as naive penalties, testing all vertices of one shape against an analytical signed distance of the other.

\figref{chain} and \figref{modes_comparison} show a hanging chain modeled with 24 independent rigid bodies and signed distance function collision terms preventing separation between adjacent links.
This system is difficult to simulate classically, as even small-timestep implicit integrators get stuck on the energy landscape, yet the expected low-dimensional nonlinear continuum dynamics emerge automatically as we fit our subspace.

\paragraph{Mechanisms and Linkages} By combining rigid body dynamics with penalties holding joints together, we can also generate subspace models for complex linkages.
Reduced mechanism models have applications in engineering~\cite{lee2013proper,boukouvala2013reduced}.
As before, these linkages are naively modeled as free-floating rigid objects, with strong penalties at joints; no angular or relative parameterizations are used.
The learned subspace for the system automatically finds a low-dimensional parameterization for linkage motion.
\figref{linkage} demonstrates this behavior on the planar Klann linkage and 3D Stewart mechanisms, both of which have applications in robotics.

\begin{figure}
\begin{center}
    \includegraphics[width=\columnwidth]{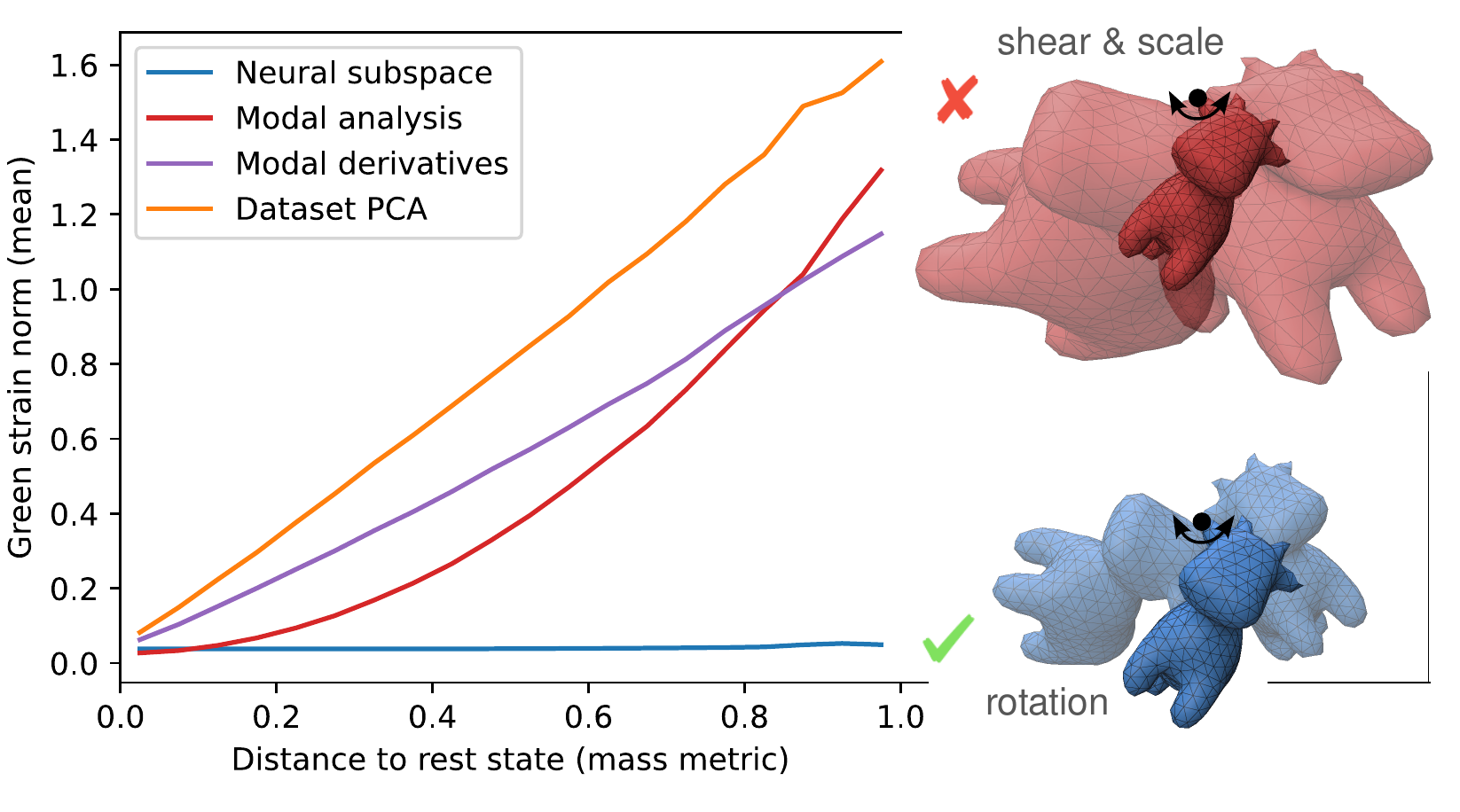}
    \caption{
      Our neural maps encode general nonlinear subspaces, greatly increasing expressivity at low dimension count.
      Here, we fit a 3d subspace to a stiff 3D deformable body under gravity, pinned at a single point, and measure the rigidity of the resulting subspace.
      Linear and quadratic approaches cannot encode the rotating motion in this low-dimensional subspace, whereas our method easily fits it.
      \label{fig:rigidity_analysis}
    }
\end{center}
\end{figure}

\begin{figure}
\begin{center}
    \includegraphics[width=\columnwidth]{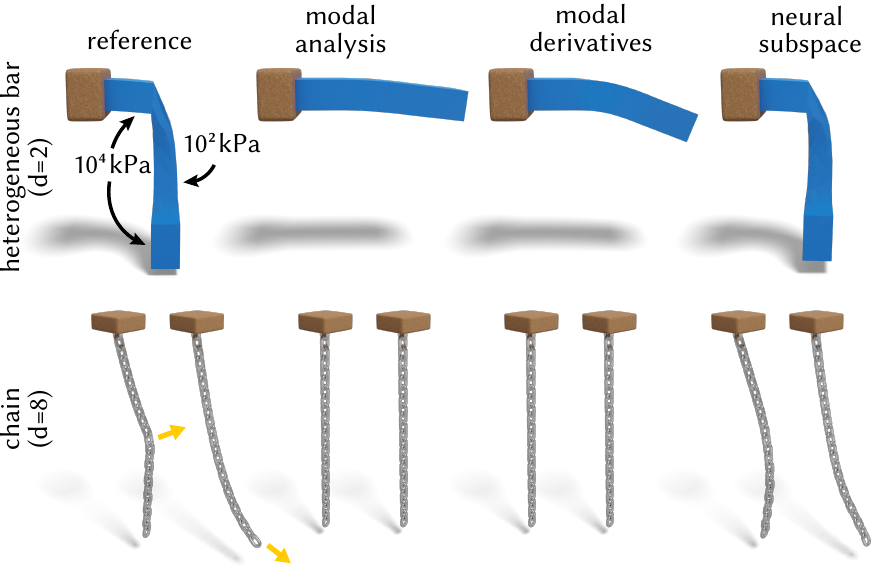}
    \caption{
      A comparison of data-free subspace generation methods: ordinary linear modal analysis, modal derivatives~\cite{Barbic2005}, and our approach, applied to a heterogeneous 3D bar and a rigid body chain with collision penalties.
      Each frame applies the same external load and visualizes the equilibrium response in the subspace.
      For the chain, the classic local methods contain no useful motions even at $d=8$.
      \label{fig:modes_comparison}
    }
\end{center}
\end{figure}

\subsection{Comparisons and Applications}
\label{sec:ComparisonsAndApplications}

In \figref{rigidity_analysis} we quantitatively evaluate $d=3$ subspaces computed on a stiff deformable object pinned at a single point, and measure the strain due to nonrigid deformation. We compare our method with linear modal analysis, modal derivatives and the principal component basis (PCA) of a small dataset recorded in an interactive session.
The linear approaches cannot represent rotations, and instead shear and scale the shape, while our approach fits a nonlinear rotation.
Note that recent supervised approaches have also tackled rotations by learning local motions coupled with a differentiable physics layer~\cite{srinivasan2021learning}.
Also, see~\secref{DataFreeVsSupervised} for an additional comparison to a baseline supervised approach.

In \figref{modes_comparison} we perform a side-by-side test of our subspaces, linear modal analysis, and modal derivatives on a heterogeneous deformable bar and rigid body chain.
For all methods we compute a subspace of matching dimension, then apply equivalent external loads and visualize the resulting subspace equilibrium.
In both experiments, our subspace much more-closely matches the expected reference physics, demonstrating the effectiveness of our approach for fitting low-dimensional yet expressive subspaces.

Recent work on generating rich nonlinear reduced spaces requires an input dataset of representative configurations or trajectories~\cite{Fulton2018,Holden2019,Shen2021}.
In these methods, data collection is a laborious, problem-specific process that requires humans in the loop or a scripted procedure to identify typical trajectories, as noted in \eg{} \cite[Sec 4.4]{Fulton2018} and \cite[Sec 6.1-6.2]{Shen2021}.
In contrast, our approach does not require an input dataset.
We generally do not expect our approach to outperform a supervised method trained on a sufficiently large and high-quality dataset; if a dataset is available it should certainly be used.

We also present two preliminary applications which show the complementary value of our low-dimensional data-free scheme.
Because our subspaces densely map on to the desired submanifold of configuration space, we can perform user-guided animation in the subspace.
The supplemental video shows a looping animation of the hanging chain constructed by choosing a set of keyframes in the latent space and applying a cyclic Catmull-Rom spline interpolation.
Additionally, in \figref{fulton_compare}, we use our automatic method as a \emph{sampler} for a downstream specialized supervised method, overcoming the primary limitation of needing to collect a dataset.

\section{Conclusion}

This work introduces a promising approach for fitting kinematic subspaces directly to physical systems, without gathering datasets of trajectories.

\begin{figure}
\begin{center}
    \includegraphics[width=\columnwidth]{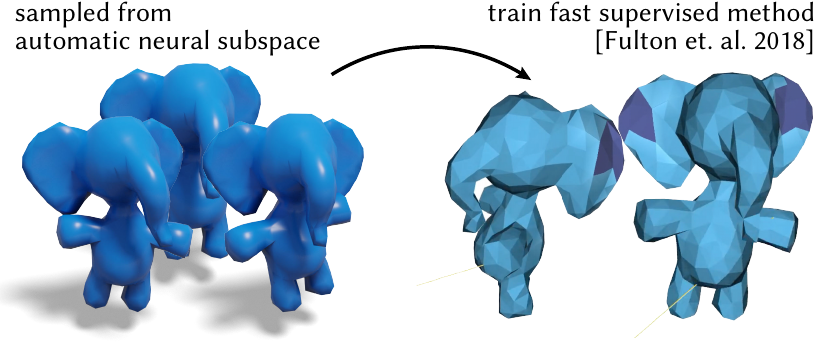}
    \caption{
      Our approach enables automatic sampling of diverse low-energy states of a system, useful for downstream data-driven applications.
      The AutoDef algorithm \cite{Fulton2018} offers fast, high-quality deformable simulation in reduced space, but requires a laborious process to collect training data.
      We first fit our subspace automatically to the deformable body of interest, then sample from the subspace to train~\cite{Fulton2018}, sidestepping the need for data collection.
      See supplement for details.
      \label{fig:fulton_compare}
    }
\end{center}
\end{figure}

\paragraph{Limitations}
Our subspace training procedure inherits both the difficulties of optimizing deep neural networks and of numerically integrating stiff physical systems.

\setlength{\columnsep}{1em}
\setlength{\intextsep}{0em}
\begin{wrapfigure}{r}{112pt}
  \includegraphics{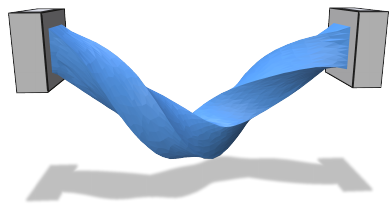}
      \caption{
      A twisted subspace.
      \label{fig:twisted_bistable}
    }
\end{wrapfigure}
Local minima may also result from isolated, locally stable configurations. For example, the inset figure shows an elastic bar pinned at both ends, where by-chance the training procedure has found a local minimum that respects the boundaries but has a $360^\circ$ turn.
More broadly, our subspaces may not perfectly reproduce desired configurations due to such local minima or simply insufficient model capacity, leading to artifacts that can be seen in some of our simulations (\eg{} the cloth in Figures~\ref{fig:teaser} \& ~\ref{fig:cloth_pinned}, and asymmetries in \figref{conditional_bar}).
We find that the training procedure in ~\secref{SubspaceExploration} generally avoids such artifacts, but it cannot guarantee to eliminate them.
Future work could develop numerical methods tailored to this hybrid problem.

Here we mainly consider low-dimensional subspaces, fitting higher dimensional subspaces $d > 10$ with our method does not necessarily capture additional effects, perhaps because \eqref{subspace_loss_penalty} becomes less effective as dimension increases.
More fundamentally, traditional subspaces methods are accompanied by theoretical analysis, while our neural networks currently have no corresponding physical theory to quantify which effects are captured and which are truncated, beyond the limiting models in \secref{modal}.

\paragraph{Future Work}
Here we used MLP architectures to encode the subspace map, but other network architectures could offer additional properties, such as equivariant networks~\cite{bronstein2021geometric} which build in rigid-invariance, or set-based networks~\cite{qi2017pointnet,wang2019dynamic} to model systems like particle fluids.

Future work could seek smoother subspaces, \eg{} via Lipschitz regularization of the MLP \cite{arjovsky2017wasserstein}, for faster implicit timestep convergence (\eqref{TimestepOptimization}) at large step sizes.
We could also expand our models for increased semantic control of subspaces, or to learn models that generalize over many physical systems at once.

\section{Acknowledgements}

The Bellairs Workshop on Computer Animation was instrumental in the conception of the research presented in this paper.

The authors acknowledge the support of the Natural Sciences and Engineering Research Council of Canada (NSERC), the Fields Institute for Mathematics, and the Vector Institute for AI.
This research is funded in part by the National Science Foundation (HCC-2212048), NSERC Discovery (RGPIN–2022–04680), the Ontario Early Research Award program, the Canada Research Chairs Program, a Sloan Research Fellowship, the DSI Catalyst Grant program and gifts by Adobe Systems.
The MIT Geometric Data Processing group acknowledges the generous support of Army Research Office grants W911NF2010168 and W911NF2110293, of Air Force Office of Scientific Research award FA9550-19-1-031, of National Science Foundation grants IIS-1838071 and CHS-1955697, from the CSAIL Systems that Learn program, from the MIT–IBM Watson AI Laboratory, from the Toyota–CSAIL Joint Research Center, from a gift from Adobe Systems, and from a Google Research Scholar award.

\bibliographystyle{ACM-Reference-Format}
\bibliography{bib}


\begin{thebibliography}{52}


\ifx \showCODEN    \undefined \def \showCODEN     #1{\unskip}     \fi
\ifx \showDOI      \undefined \def \showDOI       #1{#1}\fi
\ifx \showISBNx    \undefined \def \showISBNx     #1{\unskip}     \fi
\ifx \showISBNxiii \undefined \def \showISBNxiii  #1{\unskip}     \fi
\ifx \showISSN     \undefined \def \showISSN      #1{\unskip}     \fi
\ifx \showLCCN     \undefined \def \showLCCN      #1{\unskip}     \fi
\ifx \shownote     \undefined \def \shownote      #1{#1}          \fi
\ifx \showarticletitle \undefined \def \showarticletitle #1{#1}   \fi
\ifx \showURL      \undefined \def \showURL       {\relax}        \fi
\providecommand\bibfield[2]{#2}
\providecommand\bibinfo[2]{#2}
\providecommand\natexlab[1]{#1}
\providecommand\showeprint[2][]{arXiv:#2}

\bibitem[\protect\citeauthoryear{An, Kim, and James}{An et~al\mbox{.}}{2008}]%
        {An2008}
\bibfield{author}{\bibinfo{person}{S.~S. An}, \bibinfo{person}{T. Kim}, {and}
  \bibinfo{person}{D.~L. James}.} \bibinfo{year}{2008}\natexlab{}.
\newblock \showarticletitle{Optimizing Cubature for Efficient Integration of
  Subspace Deformations}.
\newblock \bibinfo{journal}{\emph{ACM Trans. Graph.}} \bibinfo{volume}{27},
  \bibinfo{number}{5}, Article \bibinfo{articleno}{165} (\bibinfo{date}{dec}
  \bibinfo{year}{2008}), \bibinfo{numpages}{10}~pages.
\newblock
\showISSN{0730-0301}
\urldef\tempurl%
\url{https://doi.org/10.1145/1409060.1409118}
\showDOI{\tempurl}


\bibitem[\protect\citeauthoryear{Arjovsky, Chintala, and Bottou}{Arjovsky
  et~al\mbox{.}}{2017}]%
        {arjovsky2017wasserstein}
\bibfield{author}{\bibinfo{person}{M. Arjovsky}, \bibinfo{person}{S. Chintala},
  {and} \bibinfo{person}{L. Bottou}.} \bibinfo{year}{2017}\natexlab{}.
\newblock \showarticletitle{{W}asserstein Generative Adversarial Networks}. In
  \bibinfo{booktitle}{\emph{Proceedings of the 34th International Conference on
  Machine Learning}} \emph{(\bibinfo{series}{Proceedings of Machine Learning
  Research})}, \bibfield{editor}{\bibinfo{person}{Doina Precup} {and}
  \bibinfo{person}{Yee~Whye Teh}} (Eds.), Vol.~\bibinfo{volume}{70}.
  \bibinfo{publisher}{PMLR}, \bibinfo{pages}{214--223}.
\newblock
\urldef\tempurl%
\url{https://proceedings.mlr.press/v70/arjovsky17a.html}
\showURL{%
\tempurl}


\bibitem[\protect\citeauthoryear{Barbi\v{c} and James}{Barbi\v{c} and
  James}{2005}]%
        {Barbic2005}
\bibfield{author}{\bibinfo{person}{J. Barbi\v{c}} {and} \bibinfo{person}{D.~L.
  James}.} \bibinfo{year}{2005}\natexlab{}.
\newblock \showarticletitle{Real-Time Subspace Integration for St.
  Venant-Kirchhoff Deformable Models}.
\newblock \bibinfo{journal}{\emph{ACM Trans. Graph.}} \bibinfo{volume}{24},
  \bibinfo{number}{3} (\bibinfo{date}{jul} \bibinfo{year}{2005}),
  \bibinfo{pages}{982–990}.
\newblock
\showISSN{0730-0301}
\urldef\tempurl%
\url{https://doi.org/10.1145/1073204.1073300}
\showDOI{\tempurl}


\bibitem[\protect\citeauthoryear{Benner, Gugercin, and Willcox}{Benner
  et~al\mbox{.}}{2015}]%
        {benner2015survey}
\bibfield{author}{\bibinfo{person}{P. Benner}, \bibinfo{person}{S. Gugercin},
  {and} \bibinfo{person}{K. Willcox}.} \bibinfo{year}{2015}\natexlab{}.
\newblock \showarticletitle{A survey of projection-based model reduction
  methods for parametric dynamical systems}.
\newblock \bibinfo{journal}{\emph{SIAM review}} \bibinfo{volume}{57},
  \bibinfo{number}{4} (\bibinfo{year}{2015}), \bibinfo{pages}{483--531}.
\newblock


\bibitem[\protect\citeauthoryear{Bertiche, Madadi, and Escalera}{Bertiche
  et~al\mbox{.}}{2021}]%
        {bertiche2021pbns}
\bibfield{author}{\bibinfo{person}{H. Bertiche}, \bibinfo{person}{M. Madadi},
  {and} \bibinfo{person}{S. Escalera}.} \bibinfo{year}{2021}\natexlab{}.
\newblock \showarticletitle{PBNS: physically based neural simulation for
  unsupervised garment pose space deformation}.
\newblock \bibinfo{journal}{\emph{ACM Transactions on Graphics (TOG)}}
  \bibinfo{volume}{40}, \bibinfo{number}{6} (\bibinfo{year}{2021}),
  \bibinfo{pages}{1--14}.
\newblock


\bibitem[\protect\citeauthoryear{Bertiche, Madadi, and Escalera}{Bertiche
  et~al\mbox{.}}{2022}]%
        {bertiche2022neural}
\bibfield{author}{\bibinfo{person}{H. Bertiche}, \bibinfo{person}{M. Madadi},
  {and} \bibinfo{person}{S. Escalera}.} \bibinfo{year}{2022}\natexlab{}.
\newblock \showarticletitle{Neural Cloth Simulation}.
\newblock \bibinfo{journal}{\emph{ACM Transactions on Graphics (TOG)}}
  \bibinfo{volume}{41}, \bibinfo{number}{6} (\bibinfo{year}{2022}),
  \bibinfo{pages}{1--14}.
\newblock


\bibitem[\protect\citeauthoryear{Boukouvala, Gao, Muzzio, and
  Ierapetritou}{Boukouvala et~al\mbox{.}}{2013}]%
        {boukouvala2013reduced}
\bibfield{author}{\bibinfo{person}{F. Boukouvala}, \bibinfo{person}{Y. Gao},
  \bibinfo{person}{F. Muzzio}, {and} \bibinfo{person}{M.~G. Ierapetritou}.}
  \bibinfo{year}{2013}\natexlab{}.
\newblock \showarticletitle{Reduced-order discrete element method modeling}.
\newblock \bibinfo{journal}{\emph{Chemical Engineering Science}}
  \bibinfo{volume}{95} (\bibinfo{year}{2013}), \bibinfo{pages}{12--26}.
\newblock


\bibitem[\protect\citeauthoryear{Bradbury, Frostig, Hawkins, Johnson, Leary,
  Maclaurin, Necula, Paszke, Vander{P}las, Wanderman-{M}ilne, and
  Zhang}{Bradbury et~al\mbox{.}}{2018}]%
        {jax2018github}
\bibfield{author}{\bibinfo{person}{J. Bradbury}, \bibinfo{person}{R. Frostig},
  \bibinfo{person}{P. Hawkins}, \bibinfo{person}{M.~J. Johnson},
  \bibinfo{person}{C. Leary}, \bibinfo{person}{D. Maclaurin},
  \bibinfo{person}{G. Necula}, \bibinfo{person}{A. Paszke}, \bibinfo{person}{J.
  Vander{P}las}, \bibinfo{person}{S. Wanderman-{M}ilne}, {and}
  \bibinfo{person}{Q. Zhang}.} \bibinfo{year}{2018}\natexlab{}.
\newblock \bibinfo{title}{{JAX}: composable transformations of
  {P}ython+{N}um{P}y programs}.
\newblock
\newblock
\urldef\tempurl%
\url{http://github.com/google/jax}
\showURL{%
\tempurl}


\bibitem[\protect\citeauthoryear{Brandt, Eisemann, and Hildebrandt}{Brandt
  et~al\mbox{.}}{2018}]%
        {brandt2018hyper}
\bibfield{author}{\bibinfo{person}{C. Brandt}, \bibinfo{person}{E. Eisemann},
  {and} \bibinfo{person}{K. Hildebrandt}.} \bibinfo{year}{2018}\natexlab{}.
\newblock \showarticletitle{Hyper-reduced projective dynamics}.
\newblock \bibinfo{journal}{\emph{ACM Transactions on Graphics (TOG)}}
  \bibinfo{volume}{37}, \bibinfo{number}{4} (\bibinfo{year}{2018}),
  \bibinfo{pages}{1--13}.
\newblock


\bibitem[\protect\citeauthoryear{Brandt, von Tycowicz, and Hildebrandt}{Brandt
  et~al\mbox{.}}{2016}]%
        {brandt2016geometric}
\bibfield{author}{\bibinfo{person}{C. Brandt}, \bibinfo{person}{C. von
  Tycowicz}, {and} \bibinfo{person}{K. Hildebrandt}.}
  \bibinfo{year}{2016}\natexlab{}.
\newblock \showarticletitle{Geometric flows of curves in shape space for
  processing motion of deformable objects}. In
  \bibinfo{booktitle}{\emph{Computer Graphics Forum}},
  Vol.~\bibinfo{volume}{35}. Wiley Online Library, \bibinfo{pages}{295--305}.
\newblock


\bibitem[\protect\citeauthoryear{Bronstein, Bruna, Cohen, and
  Veli{\v{c}}kovi{\'c}}{Bronstein et~al\mbox{.}}{2021}]%
        {bronstein2021geometric}
\bibfield{author}{\bibinfo{person}{M.~M. Bronstein}, \bibinfo{person}{J.
  Bruna}, \bibinfo{person}{T. Cohen}, {and} \bibinfo{person}{P.
  Veli{\v{c}}kovi{\'c}}.} \bibinfo{year}{2021}\natexlab{}.
\newblock \showarticletitle{Geometric deep learning: Grids, groups, graphs,
  geodesics, and gauges}.
\newblock \bibinfo{journal}{\emph{arXiv preprint arXiv:2104.13478}}
  (\bibinfo{year}{2021}).
\newblock


\bibitem[\protect\citeauthoryear{Choi and Ko}{Choi and Ko}{2005}]%
        {choi2005modal}
\bibfield{author}{\bibinfo{person}{M.~G. Choi} {and} \bibinfo{person}{H.-S.
  Ko}.} \bibinfo{year}{2005}\natexlab{}.
\newblock \showarticletitle{Modal warping: Real-time simulation of large
  rotational deformation and manipulation}.
\newblock \bibinfo{journal}{\emph{IEEE Transactions on Visualization and
  Computer Graphics}} \bibinfo{volume}{11}, \bibinfo{number}{1}
  (\bibinfo{year}{2005}), \bibinfo{pages}{91--101}.
\newblock


\bibitem[\protect\citeauthoryear{Du, Collins, Tenenbaum, and Sitzmann}{Du
  et~al\mbox{.}}{2021}]%
        {du2021learning}
\bibfield{author}{\bibinfo{person}{Y. Du}, \bibinfo{person}{K. Collins},
  \bibinfo{person}{J. Tenenbaum}, {and} \bibinfo{person}{V. Sitzmann}.}
  \bibinfo{year}{2021}\natexlab{}.
\newblock \showarticletitle{Learning signal-agnostic manifolds of neural
  fields}.
\newblock \bibinfo{journal}{\emph{Advances in Neural Information Processing
  Systems}}  \bibinfo{volume}{34} (\bibinfo{year}{2021}),
  \bibinfo{pages}{8320--8331}.
\newblock


\bibitem[\protect\citeauthoryear{Fulton, Modi, Duvenaud, Levin, and
  Jacobson}{Fulton et~al\mbox{.}}{2019}]%
        {Fulton2018}
\bibfield{author}{\bibinfo{person}{L. Fulton}, \bibinfo{person}{V. Modi},
  \bibinfo{person}{D. Duvenaud}, \bibinfo{person}{D.~I.~W. Levin}, {and}
  \bibinfo{person}{A. Jacobson}.} \bibinfo{year}{2019}\natexlab{}.
\newblock \showarticletitle{Latent-space Dynamics for Reduced Deformable
  Simulation}.
\newblock \bibinfo{journal}{\emph{Computer Graphics Forum}}
  (\bibinfo{year}{2019}).
\newblock


\bibitem[\protect\citeauthoryear{Gao, Sun, and Wang}{Gao et~al\mbox{.}}{2021}]%
        {gao2021super}
\bibfield{author}{\bibinfo{person}{H. Gao}, \bibinfo{person}{L. Sun}, {and}
  \bibinfo{person}{J.-X. Wang}.} \bibinfo{year}{2021}\natexlab{}.
\newblock \showarticletitle{Super-resolution and denoising of fluid flow using
  physics-informed convolutional neural networks without high-resolution
  labels}.
\newblock \bibinfo{journal}{\emph{Physics of Fluids}} \bibinfo{volume}{33},
  \bibinfo{number}{7} (\bibinfo{year}{2021}), \bibinfo{pages}{073603}.
\newblock


\bibitem[\protect\citeauthoryear{Girard}{Girard}{1987}]%
        {girard1987algorithme}
\bibfield{author}{\bibinfo{person}{D. Girard}.}
  \bibinfo{year}{1987}\natexlab{}.
\newblock \showarticletitle{Un algorithme simple et rapide pour la validation
  crois{\'e}e g{\'e}n{\'e}ralis{\'e}e sur des probl{\`e}mes de grande taille}.
\newblock \bibinfo{journal}{\emph{Rapport de recherche RR 665-M, TIM3-IMAG,
  Universit\'e de Grenoble}} (\bibinfo{year}{1987}).
\newblock


\bibitem[\protect\citeauthoryear{Grinspun, Hirani, Desbrun, and
  Schr{\"o}der}{Grinspun et~al\mbox{.}}{2003}]%
        {grinspun2003discrete}
\bibfield{author}{\bibinfo{person}{E. Grinspun}, \bibinfo{person}{A.~N.
  Hirani}, \bibinfo{person}{M. Desbrun}, {and} \bibinfo{person}{P.
  Schr{\"o}der}.} \bibinfo{year}{2003}\natexlab{}.
\newblock \showarticletitle{Discrete shells}. In
  \bibinfo{booktitle}{\emph{Proceedings of the 2003 ACM SIGGRAPH/Eurographics
  symposium on Computer animation}}. Citeseer, \bibinfo{pages}{62--67}.
\newblock


\bibitem[\protect\citeauthoryear{Grzeszczuk, Terzopoulos, and
  Hinton}{Grzeszczuk et~al\mbox{.}}{1998}]%
        {grzeszczuk1998neuroanimator}
\bibfield{author}{\bibinfo{person}{R. Grzeszczuk}, \bibinfo{person}{D.
  Terzopoulos}, {and} \bibinfo{person}{G. Hinton}.}
  \bibinfo{year}{1998}\natexlab{}.
\newblock \showarticletitle{Neuroanimator: Fast neural network emulation and
  control of physics-based models}. In \bibinfo{booktitle}{\emph{Proceedings of
  the 25th annual conference on Computer graphics and interactive techniques}}.
  \bibinfo{pages}{9--20}.
\newblock


\bibitem[\protect\citeauthoryear{Hahn, Martin, Thomaszewski, Sumner, Coros, and
  Gross}{Hahn et~al\mbox{.}}{2012}]%
        {hahn2012rig}
\bibfield{author}{\bibinfo{person}{F. Hahn}, \bibinfo{person}{S. Martin},
  \bibinfo{person}{B. Thomaszewski}, \bibinfo{person}{R. Sumner},
  \bibinfo{person}{S. Coros}, {and} \bibinfo{person}{M. Gross}.}
  \bibinfo{year}{2012}\natexlab{}.
\newblock \showarticletitle{Rig-space physics}.
\newblock \bibinfo{journal}{\emph{ACM transactions on graphics (TOG)}}
  \bibinfo{volume}{31}, \bibinfo{number}{4} (\bibinfo{year}{2012}),
  \bibinfo{pages}{1--8}.
\newblock


\bibitem[\protect\citeauthoryear{Hahn, Thomaszewski, Coros, Sumner, Cole,
  Meyer, DeRose, and Gross}{Hahn et~al\mbox{.}}{2014}]%
        {hahn2014subspace}
\bibfield{author}{\bibinfo{person}{F. Hahn}, \bibinfo{person}{B. Thomaszewski},
  \bibinfo{person}{S. Coros}, \bibinfo{person}{R.~W. Sumner},
  \bibinfo{person}{F. Cole}, \bibinfo{person}{M. Meyer}, \bibinfo{person}{T.
  DeRose}, {and} \bibinfo{person}{M. Gross}.} \bibinfo{year}{2014}\natexlab{}.
\newblock \showarticletitle{Subspace clothing simulation using adaptive bases}.
\newblock \bibinfo{journal}{\emph{ACM Transactions on Graphics (TOG)}}
  \bibinfo{volume}{33}, \bibinfo{number}{4} (\bibinfo{year}{2014}),
  \bibinfo{pages}{1--9}.
\newblock


\bibitem[\protect\citeauthoryear{Hildebrandt, Schulz, Tycowicz, and
  Polthier}{Hildebrandt et~al\mbox{.}}{2011}]%
        {hildebrandt2011interactive}
\bibfield{author}{\bibinfo{person}{K. Hildebrandt}, \bibinfo{person}{C.
  Schulz}, \bibinfo{person}{C.~v. Tycowicz}, {and} \bibinfo{person}{K.
  Polthier}.} \bibinfo{year}{2011}\natexlab{}.
\newblock \showarticletitle{Interactive surface modeling using modal analysis}.
\newblock \bibinfo{journal}{\emph{ACM Transactions on Graphics (TOG)}}
  \bibinfo{volume}{30}, \bibinfo{number}{5} (\bibinfo{year}{2011}),
  \bibinfo{pages}{1--11}.
\newblock


\bibitem[\protect\citeauthoryear{Holden, Duong, Datta, and
  Nowrouzezahrai}{Holden et~al\mbox{.}}{2019}]%
        {Holden2019}
\bibfield{author}{\bibinfo{person}{D. Holden}, \bibinfo{person}{B.~C. Duong},
  \bibinfo{person}{S. Datta}, {and} \bibinfo{person}{D. Nowrouzezahrai}.}
  \bibinfo{year}{2019}\natexlab{}.
\newblock \showarticletitle{Subspace Neural Physics: Fast Data-Driven
  Interactive Simulation}. In \bibinfo{booktitle}{\emph{Proceedings of the 18th
  Annual ACM SIGGRAPH/Eurographics Symposium on Computer Animation}}
  \emph{(\bibinfo{series}{SCA '19})}. \bibinfo{publisher}{Association for
  Computing Machinery}, \bibinfo{address}{New York, NY, USA}, Article
  \bibinfo{articleno}{6}, \bibinfo{numpages}{12}~pages.
\newblock
\showISBNx{9781450366779}
\urldef\tempurl%
\url{https://doi.org/10.1145/3309486.3340245}
\showDOI{\tempurl}


\bibitem[\protect\citeauthoryear{Hutchinson}{Hutchinson}{1989}]%
        {hutchinson1989stochastic}
\bibfield{author}{\bibinfo{person}{M.~F. Hutchinson}.}
  \bibinfo{year}{1989}\natexlab{}.
\newblock \showarticletitle{A stochastic estimator of the trace of the
  influence matrix for {L}aplacian smoothing splines}.
\newblock \bibinfo{journal}{\emph{Communications in Statistics-Simulation and
  Computation}} \bibinfo{volume}{18}, \bibinfo{number}{3}
  (\bibinfo{year}{1989}), \bibinfo{pages}{1059--1076}.
\newblock


\bibitem[\protect\citeauthoryear{James, Barbi{\v{c}}, and Pai}{James
  et~al\mbox{.}}{2006}]%
        {james2006precomputed}
\bibfield{author}{\bibinfo{person}{D.~L. James}, \bibinfo{person}{J.
  Barbi{\v{c}}}, {and} \bibinfo{person}{D.~K. Pai}.}
  \bibinfo{year}{2006}\natexlab{}.
\newblock \showarticletitle{Precomputed acoustic transfer: output-sensitive,
  accurate sound generation for geometrically complex vibration sources}.
\newblock \bibinfo{journal}{\emph{ACM Transactions on Graphics (TOG)}}
  \bibinfo{volume}{25}, \bibinfo{number}{3} (\bibinfo{year}{2006}),
  \bibinfo{pages}{987--995}.
\newblock


\bibitem[\protect\citeauthoryear{James and Pai}{James and Pai}{2002}]%
        {jamespai2002}
\bibfield{author}{\bibinfo{person}{D.~L. James} {and} \bibinfo{person}{D.~K.
  Pai}.} \bibinfo{year}{2002}\natexlab{}.
\newblock \showarticletitle{DyRT: Dynamic Response Textures for Real Time
  Deformation Simulation with Graphics Hardware}.
\newblock \bibinfo{journal}{\emph{ACM Trans. Graph.}} \bibinfo{volume}{21},
  \bibinfo{number}{3} (\bibinfo{date}{jul} \bibinfo{year}{2002}),
  \bibinfo{pages}{582–585}.
\newblock
\showISSN{0730-0301}
\urldef\tempurl%
\url{https://doi.org/10.1145/566654.566621}
\showDOI{\tempurl}


\bibitem[\protect\citeauthoryear{Kim}{Kim}{2020}]%
        {kim2020finite}
\bibfield{author}{\bibinfo{person}{T. Kim}.} \bibinfo{year}{2020}\natexlab{}.
\newblock \showarticletitle{A Finite Element Formulation of Baraff-Witkin
  Cloth}. In \bibinfo{booktitle}{\emph{Computer Graphics Forum}},
  Vol.~\bibinfo{volume}{39}. Wiley Online Library, \bibinfo{pages}{171--179}.
\newblock


\bibitem[\protect\citeauthoryear{Kingma and Ba}{Kingma and Ba}{2014}]%
        {kingma2014adam}
\bibfield{author}{\bibinfo{person}{D.~P. Kingma} {and} \bibinfo{person}{J.
  Ba}.} \bibinfo{year}{2014}\natexlab{}.
\newblock \showarticletitle{Adam: A method for stochastic optimization}.
\newblock \bibinfo{journal}{\emph{arXiv preprint arXiv:1412.6980}}
  (\bibinfo{year}{2014}).
\newblock


\bibitem[\protect\citeauthoryear{Kochkov, Smith, Alieva, Wang, Brenner, and
  Hoyer}{Kochkov et~al\mbox{.}}{2021}]%
        {kochkov2021machine}
\bibfield{author}{\bibinfo{person}{D. Kochkov}, \bibinfo{person}{J.~A. Smith},
  \bibinfo{person}{A. Alieva}, \bibinfo{person}{Q. Wang},
  \bibinfo{person}{M.~P. Brenner}, {and} \bibinfo{person}{S. Hoyer}.}
  \bibinfo{year}{2021}\natexlab{}.
\newblock \showarticletitle{Machine learning--accelerated computational fluid
  dynamics}.
\newblock \bibinfo{journal}{\emph{Proceedings of the National Academy of
  Sciences}} \bibinfo{volume}{118}, \bibinfo{number}{21}
  (\bibinfo{year}{2021}).
\newblock


\bibitem[\protect\citeauthoryear{Lee and Chen}{Lee and Chen}{2013}]%
        {lee2013proper}
\bibfield{author}{\bibinfo{person}{C.-H. Lee} {and} \bibinfo{person}{J.-S.
  Chen}.} \bibinfo{year}{2013}\natexlab{}.
\newblock \showarticletitle{Proper orthogonal decomposition-based model order
  reduction via radial basis functions for molecular dynamics systems}.
\newblock \bibinfo{journal}{\emph{International journal for numerical methods
  in engineering}} \bibinfo{volume}{96}, \bibinfo{number}{10}
  (\bibinfo{year}{2013}), \bibinfo{pages}{599--627}.
\newblock


\bibitem[\protect\citeauthoryear{Lee and Carlberg}{Lee and Carlberg}{2020}]%
        {lee2020model}
\bibfield{author}{\bibinfo{person}{K. Lee} {and} \bibinfo{person}{K.~T.
  Carlberg}.} \bibinfo{year}{2020}\natexlab{}.
\newblock \showarticletitle{Model reduction of dynamical systems on nonlinear
  manifolds using deep convolutional autoencoders}.
\newblock \bibinfo{journal}{\emph{J. Comput. Phys.}}  \bibinfo{volume}{404}
  (\bibinfo{year}{2020}), \bibinfo{pages}{108973}.
\newblock


\bibitem[\protect\citeauthoryear{Li, Wu, Tedrake, Tenenbaum, and Torralba}{Li
  et~al\mbox{.}}{2018}]%
        {li2018learning}
\bibfield{author}{\bibinfo{person}{Y. Li}, \bibinfo{person}{J. Wu},
  \bibinfo{person}{R. Tedrake}, \bibinfo{person}{J.~B. Tenenbaum}, {and}
  \bibinfo{person}{A. Torralba}.} \bibinfo{year}{2018}\natexlab{}.
\newblock \showarticletitle{Learning particle dynamics for manipulating rigid
  bodies, deformable objects, and fluids}.
\newblock \bibinfo{journal}{\emph{arXiv preprint arXiv:1810.01566}}
  (\bibinfo{year}{2018}).
\newblock


\bibitem[\protect\citeauthoryear{Li, Wu, Zhu, Tenenbaum, Torralba, and
  Tedrake}{Li et~al\mbox{.}}{2019}]%
        {li2019propagation}
\bibfield{author}{\bibinfo{person}{Y. Li}, \bibinfo{person}{J. Wu},
  \bibinfo{person}{J.-Y. Zhu}, \bibinfo{person}{J.~B. Tenenbaum},
  \bibinfo{person}{A. Torralba}, {and} \bibinfo{person}{R. Tedrake}.}
  \bibinfo{year}{2019}\natexlab{}.
\newblock \showarticletitle{Propagation networks for model-based control under
  partial observation}. In \bibinfo{booktitle}{\emph{2019 International
  Conference on Robotics and Automation (ICRA)}}. IEEE,
  \bibinfo{pages}{1205--1211}.
\newblock


\bibitem[\protect\citeauthoryear{Lin, Chitalu, and Komura}{Lin
  et~al\mbox{.}}{2022}]%
        {lin2022isotropic}
\bibfield{author}{\bibinfo{person}{H. Lin}, \bibinfo{person}{F.~M. Chitalu},
  {and} \bibinfo{person}{T. Komura}.} \bibinfo{year}{2022}\natexlab{}.
\newblock \showarticletitle{Isotropic ARAP energy using Cauchy-Green
  invariants}.
\newblock \bibinfo{journal}{\emph{ACM Transactions on Graphics (TOG)}}
  \bibinfo{volume}{41}, \bibinfo{number}{6} (\bibinfo{year}{2022}),
  \bibinfo{pages}{1--14}.
\newblock


\bibitem[\protect\citeauthoryear{Liu and Nocedal}{Liu and Nocedal}{1989}]%
        {Liu89onthe}
\bibfield{author}{\bibinfo{person}{D.~C. Liu} {and} \bibinfo{person}{J.
  Nocedal}.} \bibinfo{year}{1989}\natexlab{}.
\newblock \showarticletitle{On the Limited Memory BFGS Method for Large Scale
  Optimization}.
\newblock \bibinfo{journal}{\emph{MATHEMATICAL PROGRAMMING}}
  \bibinfo{volume}{45} (\bibinfo{year}{1989}), \bibinfo{pages}{503--528}.
\newblock


\bibitem[\protect\citeauthoryear{Noor and Peters}{Noor and Peters}{1980}]%
        {noor1980reduced}
\bibfield{author}{\bibinfo{person}{A.~K. Noor} {and} \bibinfo{person}{J.~M.
  Peters}.} \bibinfo{year}{1980}\natexlab{}.
\newblock \showarticletitle{Reduced basis technique for nonlinear analysis of
  structures}.
\newblock \bibinfo{journal}{\emph{Aiaa journal}} \bibinfo{volume}{18},
  \bibinfo{number}{4} (\bibinfo{year}{1980}), \bibinfo{pages}{455--462}.
\newblock


\bibitem[\protect\citeauthoryear{Pfaff, Fortunato, Sanchez-Gonzalez, and
  Battaglia}{Pfaff et~al\mbox{.}}{2020}]%
        {pfaff2020learning}
\bibfield{author}{\bibinfo{person}{T. Pfaff}, \bibinfo{person}{M. Fortunato},
  \bibinfo{person}{A. Sanchez-Gonzalez}, {and} \bibinfo{person}{P.~W.
  Battaglia}.} \bibinfo{year}{2020}\natexlab{}.
\newblock \showarticletitle{Learning mesh-based simulation with graph
  networks}.
\newblock \bibinfo{journal}{\emph{arXiv preprint arXiv:2010.03409}}
  (\bibinfo{year}{2020}).
\newblock


\bibitem[\protect\citeauthoryear{Qi, Su, Mo, and Guibas}{Qi
  et~al\mbox{.}}{2017}]%
        {qi2017pointnet}
\bibfield{author}{\bibinfo{person}{C.~R. Qi}, \bibinfo{person}{H. Su},
  \bibinfo{person}{K. Mo}, {and} \bibinfo{person}{L.~J. Guibas}.}
  \bibinfo{year}{2017}\natexlab{}.
\newblock \showarticletitle{Pointnet: Deep learning on point sets for 3d
  classification and segmentation}. In \bibinfo{booktitle}{\emph{Proceedings of
  the IEEE conference on computer vision and pattern recognition}}.
  \bibinfo{pages}{652--660}.
\newblock


\bibitem[\protect\citeauthoryear{Romero, Casas, P{\'e}rez, and Otaduy}{Romero
  et~al\mbox{.}}{2021}]%
        {romero2021learning}
\bibfield{author}{\bibinfo{person}{C. Romero}, \bibinfo{person}{D. Casas},
  \bibinfo{person}{J. P{\'e}rez}, {and} \bibinfo{person}{M. Otaduy}.}
  \bibinfo{year}{2021}\natexlab{}.
\newblock \showarticletitle{Learning contact corrections for handle-based
  subspace dynamics}.
\newblock \bibinfo{journal}{\emph{ACM Transactions on Graphics (TOG)}}
  \bibinfo{volume}{40}, \bibinfo{number}{4} (\bibinfo{year}{2021}),
  \bibinfo{pages}{1--12}.
\newblock


\bibitem[\protect\citeauthoryear{Romero, Otaduy, Casas, and Perez}{Romero
  et~al\mbox{.}}{2020}]%
        {romero2020modeling}
\bibfield{author}{\bibinfo{person}{C. Romero}, \bibinfo{person}{M.~A. Otaduy},
  \bibinfo{person}{D. Casas}, {and} \bibinfo{person}{J. Perez}.}
  \bibinfo{year}{2020}\natexlab{}.
\newblock \showarticletitle{Modeling and estimation of nonlinear skin mechanics
  for animated avatars}. In \bibinfo{booktitle}{\emph{Computer Graphics
  Forum}}, Vol.~\bibinfo{volume}{39}. Wiley Online Library,
  \bibinfo{pages}{77--88}.
\newblock


\bibitem[\protect\citeauthoryear{Santesteban, Otaduy, and Casas}{Santesteban
  et~al\mbox{.}}{2022}]%
        {santesteban2022snug}
\bibfield{author}{\bibinfo{person}{I. Santesteban}, \bibinfo{person}{M.~A.
  Otaduy}, {and} \bibinfo{person}{D. Casas}.} \bibinfo{year}{2022}\natexlab{}.
\newblock \showarticletitle{{SNUG}: {S}elf-{S}upervised {N}eural {D}ynamic
  {G}arments}.
\newblock \bibinfo{journal}{\emph{IEEE/CVF Conference on Computer Vision and
  Pattern Recognition (CVPR)}} (\bibinfo{year}{2022}).
\newblock


\bibitem[\protect\citeauthoryear{Shabana}{Shabana}{1991}]%
        {shabana1991theory}
\bibfield{author}{\bibinfo{person}{A.~A. Shabana}.}
  \bibinfo{year}{1991}\natexlab{}.
\newblock \bibinfo{booktitle}{\emph{Theory of vibration}}.
  Vol.~\bibinfo{volume}{2}.
\newblock \bibinfo{publisher}{Springer}.
\newblock


\bibitem[\protect\citeauthoryear{Shen, Yang, Shao, Wang, Jiang, Lan, and
  Zhou}{Shen et~al\mbox{.}}{2021}]%
        {Shen2021}
\bibfield{author}{\bibinfo{person}{S. Shen}, \bibinfo{person}{Y. Yang},
  \bibinfo{person}{T. Shao}, \bibinfo{person}{H. Wang}, \bibinfo{person}{C.
  Jiang}, \bibinfo{person}{L. Lan}, {and} \bibinfo{person}{K. Zhou}.}
  \bibinfo{year}{2021}\natexlab{}.
\newblock \showarticletitle{High-Order Differentiable Autoencoder for Nonlinear
  Model Reduction}.
\newblock \bibinfo{journal}{\emph{ACM Trans. Graph.}} \bibinfo{volume}{40},
  \bibinfo{number}{4}, Article \bibinfo{articleno}{68} (\bibinfo{date}{jul}
  \bibinfo{year}{2021}), \bibinfo{numpages}{15}~pages.
\newblock
\showISSN{0730-0301}
\urldef\tempurl%
\url{https://doi.org/10.1145/3450626.3459754}
\showDOI{\tempurl}


\bibitem[\protect\citeauthoryear{Smith, Goes, and Kim}{Smith
  et~al\mbox{.}}{2018}]%
        {smith2018stableNeoHook}
\bibfield{author}{\bibinfo{person}{B. Smith}, \bibinfo{person}{F.~D. Goes},
  {and} \bibinfo{person}{T. Kim}.} \bibinfo{year}{2018}\natexlab{}.
\newblock \showarticletitle{Stable Neo-Hookean Flesh Simulation}.
\newblock \bibinfo{journal}{\emph{ACM Trans. Graph.}} \bibinfo{volume}{37},
  \bibinfo{number}{2}, Article \bibinfo{articleno}{12} (\bibinfo{date}{mar}
  \bibinfo{year}{2018}), \bibinfo{numpages}{15}~pages.
\newblock
\showISSN{0730-0301}
\urldef\tempurl%
\url{https://doi.org/10.1145/3180491}
\showDOI{\tempurl}


\bibitem[\protect\citeauthoryear{Smith, Goes, and Kim}{Smith
  et~al\mbox{.}}{2019}]%
        {smith2019analytic}
\bibfield{author}{\bibinfo{person}{B. Smith}, \bibinfo{person}{F.~D. Goes},
  {and} \bibinfo{person}{T. Kim}.} \bibinfo{year}{2019}\natexlab{}.
\newblock \showarticletitle{Analytic eigensystems for isotropic distortion
  energies}.
\newblock \bibinfo{journal}{\emph{ACM Transactions on Graphics (TOG)}}
  \bibinfo{volume}{38}, \bibinfo{number}{1} (\bibinfo{year}{2019}),
  \bibinfo{pages}{1--15}.
\newblock


\bibitem[\protect\citeauthoryear{Srinivasan, Wang, Rojas, Kl{\'a}r, Kavan, and
  Sifakis}{Srinivasan et~al\mbox{.}}{2021}]%
        {srinivasan2021learning}
\bibfield{author}{\bibinfo{person}{S.~G. Srinivasan}, \bibinfo{person}{Q.
  Wang}, \bibinfo{person}{J. Rojas}, \bibinfo{person}{G. Kl{\'a}r},
  \bibinfo{person}{L. Kavan}, {and} \bibinfo{person}{E. Sifakis}.}
  \bibinfo{year}{2021}\natexlab{}.
\newblock \showarticletitle{Learning active quasistatic physics-based models
  from data}.
\newblock \bibinfo{journal}{\emph{ACM Transactions on Graphics (TOG)}}
  \bibinfo{volume}{40}, \bibinfo{number}{4} (\bibinfo{year}{2021}),
  \bibinfo{pages}{1--14}.
\newblock


\bibitem[\protect\citeauthoryear{Tompson, Schlachter, Sprechmann, and
  Perlin}{Tompson et~al\mbox{.}}{2017}]%
        {tompson2017accelerating}
\bibfield{author}{\bibinfo{person}{J. Tompson}, \bibinfo{person}{K.
  Schlachter}, \bibinfo{person}{P. Sprechmann}, {and} \bibinfo{person}{K.
  Perlin}.} \bibinfo{year}{2017}\natexlab{}.
\newblock \showarticletitle{Accelerating eulerian fluid simulation with
  convolutional networks}. In \bibinfo{booktitle}{\emph{International
  Conference on Machine Learning}}. PMLR, \bibinfo{pages}{3424--3433}.
\newblock


\bibitem[\protect\citeauthoryear{Von-Tycowicz, Schulz, Seidel, and
  Hildebrandt}{Von-Tycowicz et~al\mbox{.}}{2015}]%
        {vonTycowicz2015}
\bibfield{author}{\bibinfo{person}{C. Von-Tycowicz}, \bibinfo{person}{C.
  Schulz}, \bibinfo{person}{H.-P. Seidel}, {and} \bibinfo{person}{K.
  Hildebrandt}.} \bibinfo{year}{2015}\natexlab{}.
\newblock \showarticletitle{Real-Time Nonlinear Shape Interpolation}.
\newblock \bibinfo{journal}{\emph{ACM Trans. Graph.}} \bibinfo{volume}{34},
  \bibinfo{number}{3}, Article \bibinfo{articleno}{34} (\bibinfo{date}{may}
  \bibinfo{year}{2015}), \bibinfo{numpages}{10}~pages.
\newblock
\showISSN{0730-0301}
\urldef\tempurl%
\url{https://doi.org/10.1145/2729972}
\showDOI{\tempurl}


\bibitem[\protect\citeauthoryear{Wang, Sun, Liu, Sarma, Bronstein, and
  Solomon}{Wang et~al\mbox{.}}{2019}]%
        {wang2019dynamic}
\bibfield{author}{\bibinfo{person}{Y. Wang}, \bibinfo{person}{Y. Sun},
  \bibinfo{person}{Z. Liu}, \bibinfo{person}{S.~E. Sarma},
  \bibinfo{person}{M.~M. Bronstein}, {and} \bibinfo{person}{J.~M. Solomon}.}
  \bibinfo{year}{2019}\natexlab{}.
\newblock \showarticletitle{Dynamic graph cnn for learning on point clouds}.
\newblock \bibinfo{journal}{\emph{Acm Transactions On Graphics (tog)}}
  \bibinfo{volume}{38}, \bibinfo{number}{5} (\bibinfo{year}{2019}),
  \bibinfo{pages}{1--12}.
\newblock


\bibitem[\protect\citeauthoryear{Xie, Takikawa, Saito, Litany, Yan, Khan,
  Tombari, Tompkin, Sitzmann, and Sridhar}{Xie et~al\mbox{.}}{2022}]%
        {xie2022neuralfields}
\bibfield{author}{\bibinfo{person}{Y. Xie}, \bibinfo{person}{T. Takikawa},
  \bibinfo{person}{S. Saito}, \bibinfo{person}{O. Litany}, \bibinfo{person}{S.
  Yan}, \bibinfo{person}{N. Khan}, \bibinfo{person}{F. Tombari},
  \bibinfo{person}{J. Tompkin}, \bibinfo{person}{V. Sitzmann}, {and}
  \bibinfo{person}{S. Sridhar}.} \bibinfo{year}{2022}\natexlab{}.
\newblock \showarticletitle{Neural Fields in Visual Computing and Beyond}.
\newblock \bibinfo{journal}{\emph{Computer Graphics Forum}}
  (\bibinfo{year}{2022}).
\newblock
\showISSN{1467-8659}
\urldef\tempurl%
\url{https://doi.org/10.1111/cgf.14505}
\showDOI{\tempurl}


\bibitem[\protect\citeauthoryear{Yang, Li, Xu, Tian, and Zheng}{Yang
  et~al\mbox{.}}{2015}]%
        {yang2015expediting}
\bibfield{author}{\bibinfo{person}{Y. Yang}, \bibinfo{person}{D. Li},
  \bibinfo{person}{W. Xu}, \bibinfo{person}{Y. Tian}, {and} \bibinfo{person}{C.
  Zheng}.} \bibinfo{year}{2015}\natexlab{}.
\newblock \showarticletitle{Expediting precomputation for reduced deformable
  simulation}.
\newblock \bibinfo{journal}{\emph{ACM Trans. Graph}} \bibinfo{volume}{34},
  \bibinfo{number}{6} (\bibinfo{year}{2015}).
\newblock


\bibitem[\protect\citeauthoryear{Zhang, Wang, Ceylan, and Mitra}{Zhang
  et~al\mbox{.}}{2021}]%
        {zhang2021neuralGarments}
\bibfield{author}{\bibinfo{person}{M. Zhang}, \bibinfo{person}{T.~Y. Wang},
  \bibinfo{person}{D. Ceylan}, {and} \bibinfo{person}{N.~J. Mitra}.}
  \bibinfo{year}{2021}\natexlab{}.
\newblock \showarticletitle{Dynamic Neural Garments}.
\newblock \bibinfo{journal}{\emph{ACM Trans. Graph.}} \bibinfo{volume}{40},
  \bibinfo{number}{6}, Article \bibinfo{articleno}{235} (\bibinfo{date}{dec}
  \bibinfo{year}{2021}), \bibinfo{numpages}{15}~pages.
\newblock
\showISSN{0730-0301}
\urldef\tempurl%
\url{https://doi.org/10.1145/3478513.3480497}
\showDOI{\tempurl}


\bibitem[\protect\citeauthoryear{Zheng, Zhou, Ceylan, and Barbic}{Zheng
  et~al\mbox{.}}{2021}]%
        {zheng2021deep}
\bibfield{author}{\bibinfo{person}{M. Zheng}, \bibinfo{person}{Y. Zhou},
  \bibinfo{person}{D. Ceylan}, {and} \bibinfo{person}{J. Barbic}.}
  \bibinfo{year}{2021}\natexlab{}.
\newblock \showarticletitle{A deep emulator for secondary motion of 3d
  characters}. In \bibinfo{booktitle}{\emph{Proceedings of the IEEE/CVF
  Conference on Computer Vision and Pattern Recognition}}.
  \bibinfo{pages}{5932--5940}.
\newblock


\end{thebibliography}

\clearpage

\textbf{\Large Supplemental Material}

\appendix

\section{Derivation of Proposition~\ref{prop:limit}}\label{sec:reductionproof}

In the limit $\lambda \to \infty$, our metric-preserving constraint in \eqref{subspace_loss} is imposed precisely, forcing the subspace function to be linear and subject to mass orthonormality constraints (see Proposition~\ref{prop:affine} in Appendix~\ref{sec:uselesstheory}):
\begin{equation}
  \label{eq:linear_subspace_limit_lambda}
  f_{\theta}(z) = Az + b
  \quad \textrm{where} \quad
  A^T M A = \sigma^{2} I_{d\times d}.
\end{equation}
where $A\in\R^{n\times d}$, $b\in\R^n$, and $\theta=(A,b)$. 

If we take the limit of the scaling factor $\sigma \to 0^{+}$, the entries of the matrix $A$ become small due to the mass weighted orthonormality constraints in \eqref{linear_subspace_limit_lambda}. Hence, for sufficiently small $\sigma$, we can expect that the outputs of our subspace function $f$ in \eqref{linear_subspace_limit_lambda} will not stay far from the value of the inhomogeneous term $b$.

To understand behavior of our model in this perturbative regime, we can approximate our original formulation in \eqref{subspace_loss} using a truncated second order Taylor expansion of the potential energy $\PotentialEnergy(\cdot)$ centered at $b$:
\begin{equation}
  \label{eq:subspace_loss_quadratic}
  \min_{A,b} \E_{z\sim \mathcal{N}} \bigg[\PotentialEnergy(b) + g(b)^T A z + \frac{1}{2} z^T A^T H(b) A z \bigg] \\
  \quad \textrm{s.t.} \quad A^T M A = \sigma^{2} I.
\end{equation}
where $g(x)$ and $H(x)$ are the gradient and Hessian of the potential energy $\PotentialEnergy$, respectively.

Since $\mathcal N$ has mean zero, the linear term vanishes from the problem above.  Moreover, recognizing it as the Girard-Hutchinson trace estimator \cite{girard1987algorithme,hutchinson1989stochastic}, we can manipulate the quadratic term as follows:
$$
z^T A^T H(b) A z = tr(z^T A^T H(b) A z) = tr( A^T H(b) A zz^T),
$$
since a scalar is its own trace and by the property $tr(AB)=tr(BA)$.  Since $\mathcal N$ has the identity matrix as its covariance, we arrive at the following simplification of \eqref{subspace_loss_quadratic}:
\begin{equation}
  \label{eq:subspace_loss_quadratic_red}
  \min_{A,b} \bigg[\PotentialEnergy(b) + \frac{1}{2} tr \left( A^T H(b) A \right) \bigg] \\
  \quad \textrm{s.t.} \quad A^T M A = \sigma^{2} I.
\end{equation}
As $\sigma\to0$, the second term of the objective in \eqref{subspace_loss_quadratic_red} becomes negligible, and the potential energy term dominates.  Fixing $b$ and optimizing for $A$ yields a generalized eigenvalue problem.

Hence, we have motivated that as $\sigma\to0$ and $\lambda\to\infty$, we recover \eqref{perturbativeformula}. 
This formulation is exactly the classical linear method for modal analysis, discussed e.g.\ in \citet{shabana1991theory}.

\section{Affine Property}\label{sec:uselesstheory}

\begin{proposition}\label{prop:affine}
 Suppose $f(z):\Omega\to\R^n$ is $C^1$ on an open, connected domain $\Omega\subseteq\R^d$.  Then, $f(z)$ satisfies \eqref{loglipschitz} with equality for all $z,z'\in\Omega$ if and only if $f(z)=Az+b$ for some $A\in\R^{n\times d}, b\in \R^n$ with $A^T M A =\sigma^2 I_{d\times d}$.
 \end{proposition}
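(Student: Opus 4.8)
The plan is to first rewrite the hypothesis in a globally smooth form and then exploit the $C^1$ assumption by differentiation. For $z\neq z'$, equality in \eqref{loglipschitz} is equivalent to $|f(z)-f(z')|_M=\sigma|z-z'|$, and squaring eliminates both the logarithm and the square root, yielding the polynomial identity $(f(z)-f(z'))^\top M(f(z)-f(z'))=\sigma^2|z-z'|^2$. This now holds for \emph{all} $z,z'\in\Omega$ (trivially when $z=z'$) and is $C^1$ in $(z,z')$. The ``if'' direction is then immediate: substituting $f(z)=Az+b$ turns the left-hand side into $(z-z')^\top A^\top M A(z-z')$, which equals $\sigma^2|z-z'|^2$ for all $z,z'$ exactly when $A^\top M A=\sigma^2 I_{d\times d}$.

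For the ``only if'' direction I would differentiate this squared identity. Differentiating in $z$ gives $J(z)^\top M(f(z)-f(z'))=\sigma^2(z-z')$, where $J:=Df$; since this involves only $f$ and its first derivatives, it is legitimate for a $C^1$ map. The key move is to then differentiate this relation in the \emph{other} variable $z'$: because $J(z)$ does not depend on $z'$, only $f(z')$ is differentiated, again requiring only $C^1$, and one obtains $J(z)^\top M J(z')=\sigma^2 I_{d\times d}$ for all $z,z'\in\Omega$. This decoupled double differentiation is precisely what lets us avoid assuming $C^2$.

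Finally I would show the Jacobian is constant. Setting $D:=J(z)-J(z')$ and expanding $D^\top M D$ using the relation $J(z)^\top M J(z')=\sigma^2 I$ for every pairing of the two arguments makes all four terms equal to $\sigma^2 I$, so they telescope to $D^\top M D=0$; since $M$ is symmetric positive definite, this forces $D=0$, i.e.\ $J$ is constant on $\Omega$. Connectedness of $\Omega$ then integrates this to $f(z)=Az+b$ with $A=J$, and the constraint $A^\top M A=\sigma^2 I$ is read off by setting $z=z'$ in the same relation. I expect the main obstacle to be regularity bookkeeping rather than any deep step: a careless argument would differentiate twice in the same variable and spuriously need $C^2$. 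An alternative route---factoring $M=M^{1/2}M^{1/2}$ to make $f$ a genuine Euclidean distance-preserving map and invoking a rigidity/Mazur--Ulam-type theorem---should also work, but it is complicated by $f$ failing to be surjective ($d<n$), so the elementary differentiation argument is cleaner and self-contained.
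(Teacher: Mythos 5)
Your proposal is correct and follows essentially the same route as the paper's proof: square the identity, differentiate once in each variable separately to get $J(z)^\top M J(z') = \sigma^2 I$ while only using $C^1$, telescope to show the Jacobian is constant, and conclude affinity on the connected domain, with the converse by direct substitution. (If anything, your explicit invocation of positive definiteness of $M$ in the step $D^\top M D = 0 \Rightarrow D = 0$ is slightly more careful than the paper's, which writes only $M \succeq 0$.)
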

\begin{proof}
We start with the Lipschitz constant expression:
\begin{equation*}
    \left| f(z_{a}) - f(z_{b}) \right|_{M} = \sigma \left| z_{a} - z_{b} \right| \quad \forall z_{a},z_{b} \in \Omega,
\end{equation*}
where $\sigma$ is the positive Lipschitz constant. Squaring this expression implies
\begin{equation*}
    \left| f(z_{a}) - f(z_{b}) \right|^{2}_{M} = \sigma^{2} \left| z_{a} - z_{b} \right|^{2} \quad \forall z_{a},z_{b} \in \Omega.
\end{equation*} 
Since $f\in C^{1}(\Omega)$, taking the derivative w.r.t.\ $z_{a}$ implies
\begin{equation*}
 J(z_{a})^{T} M \left( f(z_{a}) - f(z_{b}) \right) = \sigma^{2} (z_{a} - z_{b}) \quad \quad J(z) := \left( \frac{\partial f}{\partial z}\right).
\end{equation*}
Taking the derivative of this expression w.r.t.\ $z_{b}$ implies
\begin{equation}\label{eq:jacinnerprod}
 J(z_{a})^{T} M J(z_{b}) = \sigma^{2} I_{d\times d} \quad \forall z_{a},z_{b} \in \Omega.
\end{equation}
Since the expression holds in the $z_a=z_b$ case, applying the expression above multiple times shows
\begin{equation*}
 \left( J(z_{a}) - J(z_{b}) \right)^{T} M \left( J(z_{a}) - J(z_{b}) \right) = 0.
\end{equation*}
Since $M\succeq0$, we thus have
\begin{equation*}
  J(z_{a}) = J(z_{b}) = \mathrm{const.} \quad \forall z_{a},z_{b} \in \Omega.
\end{equation*}
Since $f$ has a constant Jacobian in $\Omega$, it is automatically affine:
\begin{equation*}
    f(z) = A z + b \quad \forall z \in \Omega.
\end{equation*}
Moreover, $A^{T} M A = \sigma^{2} I_{d\times d}$ thanks to \eqref{jacinnerprod}.

To prove the reverse direction, note that the relationship
\begin{equation*}
    \left| A(z_{a} - z_{b}) \right|_{M} \quad s.t. \quad A^{T} M A = \sigma^{2} I_{d\times d}
\end{equation*}
implies
\begin{equation*}
    \sigma \left|z_{a} - z_{b} \right|.
\end{equation*}
\end{proof}

\section{Additional Details}
\label{sec:AdditionalDetails}

This sections gathers additional implementation and experimental details.

\subsection{Selecting Hyperparameters}
\label{sec:SelectingHyperparameters}

Like most learning-based approaches, our method requires a choice of hyperparameters to weight the objective function, in our case the penalty strength $\lambda$ and metric scaling parameter $\sigma$.
To be clear, adjusting two parameters is a modest burden as neural networks go, and our networks train very quickly (\secref{ImplementationDetails}); we discuss hyperparameter selection in-depth here to facilitate the application of our method to new physical systems.
Importantly, although hyperparameters may need new settings for new classes of systems (\eg{} cloth \vs{} kinematic mechanisms), they can be reused to fit many instances of a particular system (\eg{} many different cloth systems).
\tabref{experiment_details} gives hyperparameters for all examples in this work.

The hyperparameter $\sigma$ should be chosen based on the desired behavior of the subspace.
Large values yield a subspace which spans extreme states, while smaller values concentrate the subspace tightly around low-energy configurations.
The diversity of the subspace also affects the ideal neural network size.
Subspaces with large $\sigma$ which span a larger kinematic range may also require a larger network to accurately resolve the subspace, whereas smaller networks may be sufficient for a subspace with small $\sigma$ that only represents a narrow range of motions.

Note also that $\sigma$ depends on the physical units in which the configuration is measured.
We recommend initially choosing a large value for $\sigma$ and visualizing randomly-sampled system configurations during training, recalling that \secref{SubspaceExploration} linearly grows the subspace diversity as training proceeds.
For example, if the subspace spans a suitable range of configurations $1/3$ of the way through training, then $\sigma \gets 1/3 \sigma$ is a reasonable choice of parameter, and training can be repeated with this value.

The hyperparameter $\lambda$ weights the approximately-isometric objective; it should be chosen ensure \eqref{subspace_loss_penalty} has an effect, but also does not dominate the objective and enforce a restrictive affine subspace (see ~\secref{uselesstheory}).
This is easily assessed by measuring the unitless ratio in~\eqref{loglipschitz} during training, if is far from $1$ then $\lambda$ should be increased, and if it very close to $1$ (\eg{} within $10^{-3}$) then $\lambda$ should be reduced.

\subsection{Data-Free vs.\ Supervised Methods}
\label{sec:DataFreeVsSupervised}

\begin{figure}
\begin{center}
    \includegraphics[width=\columnwidth]{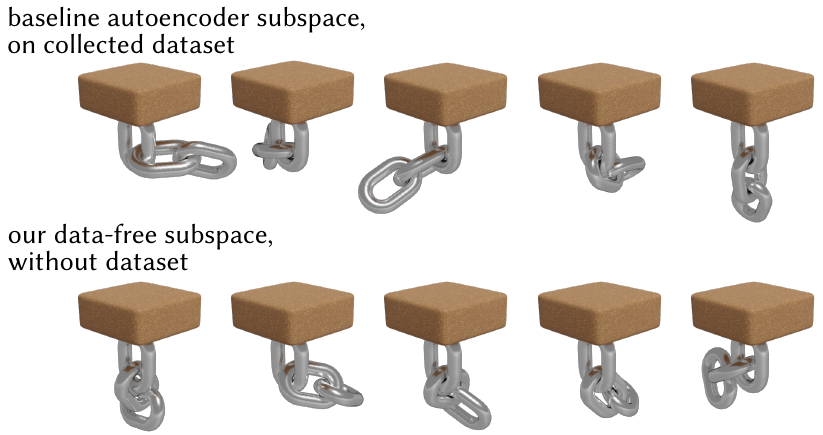}
    \caption{
      A comparison of samples from the latent space of an autoencoder trained on a manually-collected dataset (\emph{top row}), and our data-free approach on the same system (\emph{bottom row}). Both use the same latent dimension ($d=4$).
      \label{fig:chain_compare_autoenc}
    }
\end{center}
\end{figure}

The primary advantage of our formulation is that it fits a subspace using only potential energy function for a system, and does not require a training dataset.
Nonetheless, it is useful to consider how the quality of the subspaces compares with a baseline supervised method if a dataset were available.
To that end, we gather a dataset by interactively simulating the chain from \figref{chain}, here with fewer links to facilitate real-time robust full simulation.
The resulting dataset contains $40\textrm{k}$ sampled states of the system.
Our method is used to fit a subspace with a $5 \times 128$ MLP from a $d=4$ latent space, which does not require the dataset.
As a simple baseline model, we train an autoencoder, where the decoder is an MLP identical to our subspace model, and the encoder is a matching $5 \times 128$ hidden layer MLP.
The autoencoder is fit via reconstruction loss, along with a weak regularizer to encourage a $0$-centered latent space.
\figref{chain_compare_autoenc} shows samples from the resulting spaces.

\subsection{Performance Scaling}
\label{sec:PerformanceScaling}

\begin{figure}
\begin{center}
    \includegraphics[width=\columnwidth]{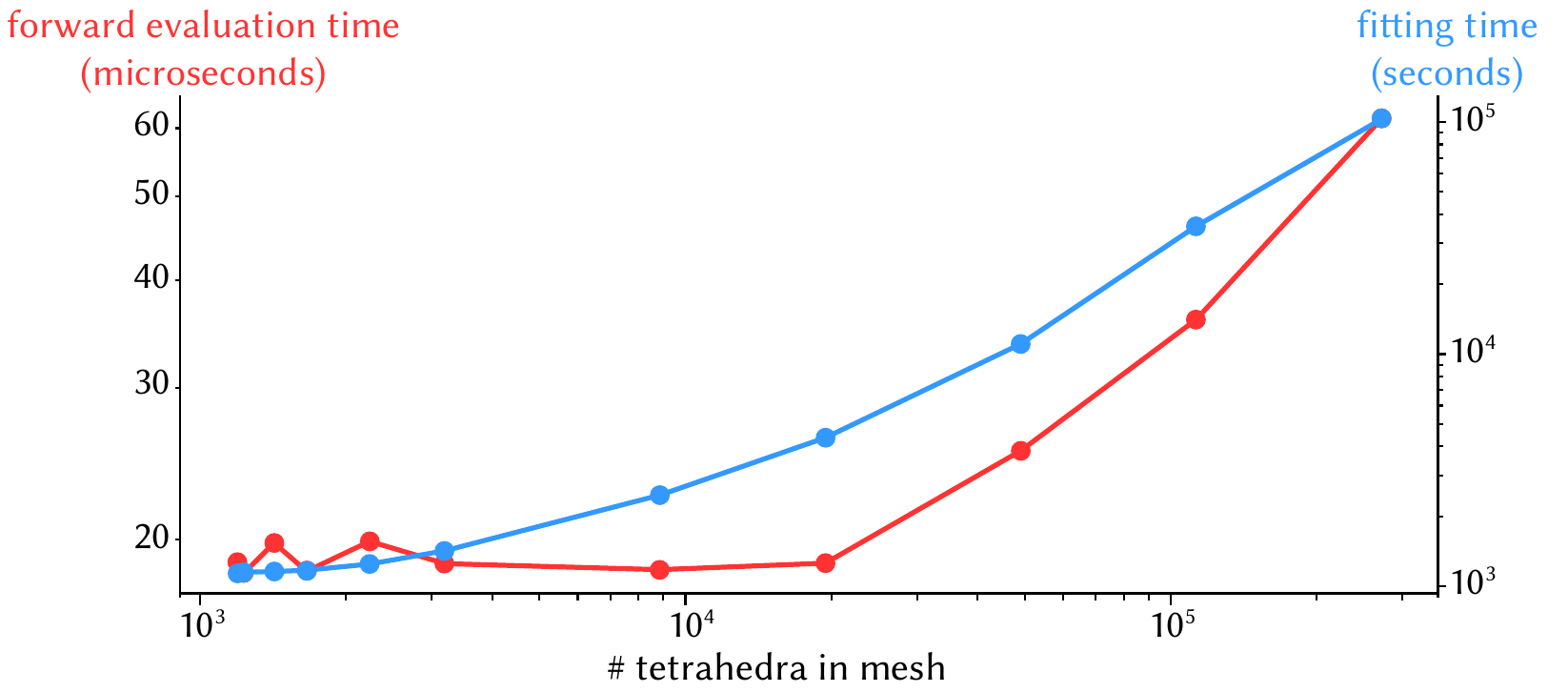}
    \caption{
      We evaluate the performance scaling of our method, fitting elastic deformation subspaces to the same object tetrahedralized at various resolutions. 
      Each data point is a fitted subspace at a different mesh resolution; the left red axis gives the time for a single forward evaluation of the  subspace map, while the right blue axis gives the fitting time.
      \label{fig:scaling_plot}
    }
\end{center}
\end{figure}

To measure the performance scaling of our method, we fit a series of subspaces to an elastic deformation system where the shape from \figref{rigidity_analysis} is discretized at various mesh resolutions ranging from $\approx 1\textrm{k}$ to $\approx 150\textrm{k}$ degrees of freedom.
\figref{scaling_plot} gives the corresponding time cost, measured on the same setup as in \secref{ImplementationDetails}.
Our method scales well to larger mesh sizes, especially for forward evaluation.

We naively use the exact same training scheme from \secref{ImplementationDetails} here and throughout this work. 
Likewise, all problem scales use the same $5 \times 128$ MLP model, increasing only the output dimension of the last layer to match the degrees of freedom for the system.
In practice one might adjust model sizes and training schedules for problems with vastly ranging orders of magnitude to tune performance.
Furthermore, approaches such as adaptive cubature~\cite{An2008} are well-suited to accelerate potential energy evaluation for high-resolution deformable models, which could greatly accelerate the fitting procedure.

\subsection{Experiment Details}
\label{sec:ExperimentDetails}

\paragraph{Sampling for \cite{Fulton2018}}

\secref{ComparisonsAndApplications} shows a preliminary application where our data-free subspace is used to sample data for an existing downstream supervised approach, sidestepping the need for dataset collection.
In particular, we automatically generate training data for the AutoDef method \cite{Fulton2018}, a recent approach which offers fast deformable simulations but requires significant effort to collect training data.
To do so, we first fit our subspace as usual to a single elephant mesh from the AutoDef experiment set, using the training parameters listed in ~\tabref{experiment_details}.
Then, we randomly sample 1000 simulation states by taking random sinusoidal motions in latent space, and applying our fitted subspace map $q \gets f_\theta(z)$ to get the corresponding system configurations. 
The states are encoded as displacements from the rest pose as expected by the AutoDef formulation.
These displacements are then used in-place of a manually collected training dataset to fit AutoDef as described in \citet{Fulton2018}.
The original AutoDef work proposes a nontrivial pipeline of user interaction to generate training data; we find that in this initial experiment substituting our automatically-sampled unsupervised population yields comparable results without the need to manually collect data.

\end{document}